\newcommand{\MBB}{\mathbb}
\newcommand{\MCAL}{\mathcal}
\newcommand{\PAR}{\bindnasrepma}
\newcommand{\TENS}{\otimes}
\newcommand{\DS}{\displaystyle}
\newtheorem{lemma}{Lemma}[section]
\newtheorem{theorem}{Theorem}[section]
\newtheorem{definition}{Definition}[section]
\newtheorem{proposition}{Proposition}[section]
\newtheorem{example}{Example}[section]
\newtheorem{remark}{Remark}
\newenvironment{proof}{\begin{flushleft}{\bf Proof:} \ \ }{\end{flushleft}}
\begin{document}
\title{A New Linear Time Correctness Condition for Proof Nets of Multiplicative Linear Logic}
\author{Satoshi Matsuoka \\
  National Institute of Advanced Industrial \\
  Science and Technology (AIST), \\
        1-1-1 Umezono, Tsukuba, Ibaraki, 305-8565 Japan\\
        {\tt matsuoka@ni.aist.go.jp}}
\date{}
\maketitle
\begin{abstract}
  In this paper, we give a new linear time correctness condition for proof nets of Multiplicative Linear Logic
  without units.
  Our approach 
  is based on a rewriting system over trees.
  We have only three rewrite rules.
  Compared with previous linear time correctness conditions, our system is surprisingly simple, intuitively appealing, and easy to implement. 
\end{abstract}
\section{Introduction}
More than three decades ago, J.Y. Girard introduced the notion of proof nets for unit free Multiplicative Linear Logic (for short, MLL)\cite{Gir87}.
It is a parallel syntax for MLL proofs, removing redundancy of sequent calculus proofs.
In \cite{Gir87}, he introduced MLL proof structures, which are graphs whose nodes are labeled by MLL formulas
and then defined MLL proof nets as sequentializable MLL proof structures,
where an MLL proof structure is sequentializable if
one can recover a sequent calculus proof from it by a decomposition procedure. 
Moreover he introduced a topological property called the {\it long trip condition} for MLL proof structures
and showed that an MLL proof structure is an MLL proof net if and only if it satisfies the long trip condition.
Such a characterization is called a {\it correctness condition} for MLL proof nets.
Since then many other correctness conditions have been given for MLL and its variants or extensions by many researchers.

Complexity questions about correctness conditions arise naturally.
The first linear time correctness condition for MLL is given in \cite{Gue99},
which is based on contractability condition \cite{Dan90}.
Other linear time correctness conditions are given in \cite{MO00},
which are based on that of essential nets, which are an intuitionistic variant of MLL proof nets.
Moreover de Naurois and Mogbil introduce a correctness condition for MLL and their extensions
based on topological conditions of arbitrarily selected
one DR-graph (\cite{DR89}) 
and showed that they are NL-complete \cite{deNM11}.

In this paper we introduce a new linear time correctness condition for MLL.
It is based on that of \cite{deNM11}.
Although de Naurois and Mogbil showed their correctness condition is NL-complete, 
its linear time termination cannot be derived from their presentation in \cite{deNM11} directly.
In order to establish the linear time correctness condition,
we define a {\it rewriting system} over {\it deNM-trees}, where
a deNM-tree is a labeled-tree, whose definition is inspired by the correctness condition in \cite{deNM11}.
The rewriting system has only three rewrite rules, which is remarkably simple. 
In the rewriting system, an active node flows in a deNM-tree, reducing nodes by the rewriting rules.
However, the rewriting system may lead to quadratic time termination in the worse case.
In order to fix the situation, we introduce more sophisticated data structures and a rewriting strategy.
Thanks to them, we can achieve the linear time termination.

Compared with \cite{Gue99} and \cite{MO00}, our correctness condition is
surprisingly simple and intuitively appealing.
While the correctness condition in \cite{Gue99} has to use a non-local jump rule called the new rule,
all three rewriting rules in our system are strictly local. 
In addition, any of correctness conditions in \cite{MO00} is rather complex,
since they need complicated queries about directed paths or a synchronization mechanism.
Our rewriting system consists of just three rewrite rules. 

Besides, we also gave an implementation \cite{Mat19a} for our linear time correctness condition.
Compared with a naive quadratic implementation for the correctness condition in \cite{deNM11},
our new implementation is much faster, especially in bigger MLL proof structures. 
As far as we know, there are no publicly available implementations for linear time correctness conditions in \cite{Gue99} and \cite{MO00}.

\section{Multiplicative Linear Logic, Proof Structures and Proof Nets}
\subsection{Multiplicative Linear Logic}
We introduce the system of Multiplicative Linear Logic (for short MLL).
We define {\it MLL formulas}, which are denoted by $F, G, H, \ldots$,  by the following grammar:
\[
F ::= p \, \, | \, \, p^\bot \, \, | \, \, F \TENS G \, \, | \, \, F \PAR G
\]
The negation of $F$, which is denoted by $F^\bot$ is defined as follows:
\[
\begin{array}{lcl}
  {(p)}^\bot & = & p^\bot \\
  {(p^\bot)}^\bot & = & p \\
{(F \TENS G)}^\bot & = & G^\bot \PAR F^\bot \\
{(F \PAR G)}^\bot  & = & G^\bot \TENS F^\bot 
\end{array}
\]
The formula $p$ is called an {\it atomic} formula.
In this paper, we only consider the logical system with only one atomic formula:
We can reduce the correctness condition with many atomic formulas to this simplified case by forgetting the information. 
We denote {\it multisets of MLL formulas} by $\Lambda, \Lambda_1, \Lambda_2, \ldots$.
An MLL sequent is a multiset of MLL formulas $\Lambda$.
We write an MLL sequent $\Lambda$ as $\vdash \Lambda$. 
The inference rules of MLL are as follows:
\[
\begin{array}{llcll}
{\rm ID}    & \frac{}{\DS \vdash p^\bot, p} & & & \\
& & & & \\
\TENS & \frac{\DS \vdash \Lambda_1,  F \quad \quad \vdash \Lambda_2, G}{\DS \vdash \Lambda_1, \Lambda_2, F \TENS G}
& &
\PAR & \frac{\DS \vdash \Lambda, F, G}{\DS \vdash \Lambda, F \PAR G}
\end{array}
\]
We note that we restrict the ID-axiom to that with only atomic formula $p$ and its negation $p^{\bot}$. 
We omit the {\it cut} rule that has the form
\[
{\rm Cut} \, \, \frac{\DS \vdash \Lambda_1,  F \quad \quad \vdash \Lambda_2, F^{\bot}}{\DS \vdash \Lambda_1, \Lambda_2}
\]
because it can be identified with the $\TENS$-rule for our purpose.
\subsection{MLL Proof Nets}
Next we introduce MLL proof nets.
Figure~\ref{figMLLLinks} shows the {\it MLL links} we use.
Each MLL link has a few MLL formulas.
Such an MLL formula is a conclusion or a premise of the MLL link, which is specified as follows:
\begin{enumerate}
\item In an ID-link, each of $p$ and $p^\bot$ is called a conclusion of the link.
\item In a $\TENS$-link, each of $F$ and $G$ is called a premise of the link and
  $F \TENS G$ is called a conclusion of the link.
\item In a $\PAR$-link, each of $F$ and $G$ is called a premise of the link and
  $F \PAR G$ is called a conclusion of the link.
\end{enumerate}
In the definition above $F$ is called left premise and $G$ right premise. 
\begin{figure}[htbp]
\begin{center}
  \includegraphics[scale=0.6]{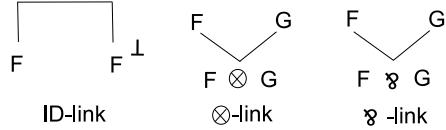}
\end{center}
 \caption{MLL Links}
 \label{figMLLLinks}
\end{figure}
An MLL {\it proof structure} $\Theta$ is a set of MLL links that satisfies the following conditions:
\begin{enumerate}
\item For each link $L$ in $\Theta$,
  each conclusion of $L$ is a premise of at most one link other than $L$ in $\Theta$.
\item For each link $L$ in $\Theta$,
  each premise of $L$ must be a conclusion of exactly one link other than $L$ in $\Theta$.
\end{enumerate}
A formula occurrence $F$ in an MLL proof structure $\Theta$ is a conclusion of $\Theta$ if
$F$ is not a premise of any link in $\Theta$. 

An MLL {\it proof net} is an MLL proof structure that is constructed by the rules in Figure~\ref{figMLLProofNets}.
Note that each rule in Figure~\ref{figMLLProofNets} has the corresponding inference rule in the MLL sequent calculus. 
All MLL proof structures are not necessarily an MLL proof net.
\begin{figure}[htbp]
\begin{center}
  \includegraphics[scale=0.6]{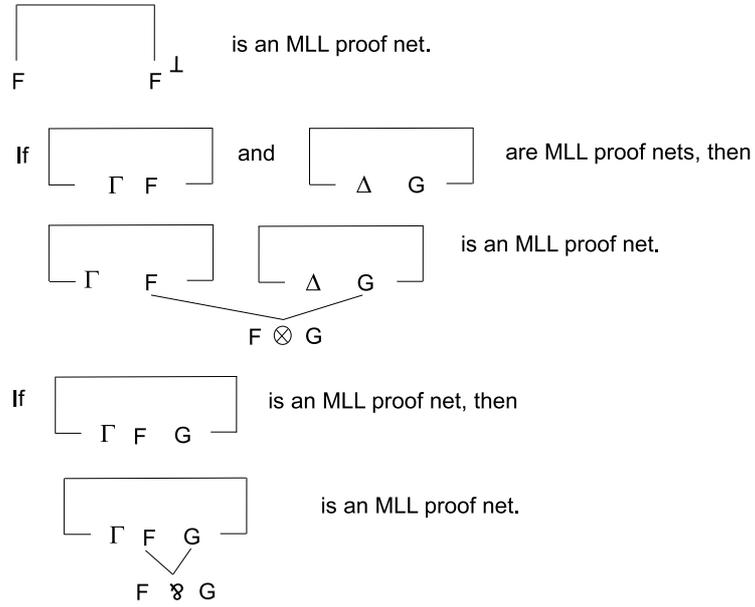}
\end{center}
 \caption{Definition of MLL Proof Nets}
 \label{figMLLProofNets}
\end{figure}
Next we introduce a characterization of MLL proof nets using the notion of DR-switchings.
A DR-switching $S$ for an MLL proof structure $\Theta$ is a
function from the set of $\PAR$-links in $\Theta$ to $\{ 0, 1 \}$.
The DR-graph $S(\Theta)$ for $\Theta$ and $S$ is defined by the rules of Figure~\ref{figDRSwitchings}.
Then the following characterization holds.
\begin{theorem}[\cite{DR89}]
  \label{thm-DR-character}
  An MLL proof structure $\Theta$ is an MLL proof net
  if and only if
  for any DR-switching $S$ for $\Theta$, 
  the DR-graph $S(\Theta)$ is acyclic and connected. 
\end{theorem}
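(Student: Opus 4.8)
The plan is to prove both directions by induction, reflecting the two-sided nature of the statement. For the ``only if'' direction I would argue by induction on the rules of Figure~\ref{figMLLProofNets} that build up $\Theta$, establishing the stronger invariant that every DR-graph $S(\Theta)$ is in fact a \emph{tree} (acyclic and connected). For the ``if'' direction --- the sequentialization half --- I would argue by induction on the number of links in $\Theta$, peeling off a terminal link and checking that the DR-graph condition is inherited by the smaller structure.

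For the forward direction, the base case is a single ID-link: there are no $\PAR$-links, the unique DR-graph is the single edge joining $p$ and $p^{\bot}$, which is a tree. For the $\PAR$-rule, a switching $S$ of $\Theta$ restricts to a switching $S'$ of the immediate subnet $\Theta'$ and fixes a value for the new $\PAR$-link; the new node $F \PAR G$ is a conclusion and is joined to exactly one of $F, G$ (the one selected by $S$), so $S(\Theta)$ is obtained from the tree $S'(\Theta')$ by attaching a single fresh leaf --- still a tree. For the $\TENS$-rule, a switching of $\Theta$ splits into switchings of the two immediate subnets $\Theta_1, \Theta_2$, whose DR-graphs are disjoint trees by the induction hypothesis; adding the $\TENS$-link introduces the node $F \TENS G$ joined to the conclusion $F$ of one tree and the conclusion $G$ of the other, linking two disjoint trees through one new vertex, which again yields a tree. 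Hence every DR-graph is acyclic and connected.

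For the reverse direction I would first dispose of the $\PAR$ case: if some conclusion of $\Theta$ is the conclusion $F \PAR G$ of a $\PAR$-link $L$, then in every DR-graph that node is a leaf, so deleting $L$ yields a structure $\Theta'$ whose every DR-graph is again a tree (a leaf deletion); by the induction hypothesis $\Theta'$ is a proof net, and re-applying the $\PAR$-rule shows $\Theta$ is too. When no such terminal $\PAR$-link exists, every conclusion of $\Theta$ is atomic or the conclusion of a $\TENS$-link, and the argument hinges on finding a \emph{splitting} $\TENS$-link $L$ --- one whose deletion disconnects $\Theta$ into two proof structures $\Theta_1$ and $\Theta_2$ having $F$ and $G$ as respective conclusions, each inheriting the tree condition from $\Theta$ --- so that the $\TENS$-rule reconstitutes $\Theta$.

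The main obstacle is precisely the existence of such a splitting $\TENS$-link. I expect to establish it through the theory of \emph{empires}: for a conclusion $A$ one defines the empire $e_A$ as the largest sub-proof-structure having $A$ among its conclusions, characterized by which nodes lie on the $A$-side across switchings. Using that each DR-graph is a tree --- so that between any two nodes there is a unique path and removing a node splits it cleanly --- one shows that if no terminal $\TENS$-link were splitting, then the empires of the terminal conclusions would overlap in a way forcing either a cycle in some DR-graph or a disconnection, contradicting the hypothesis. The delicate point is that this argument must exploit acyclicity and connectedness \emph{simultaneously and over all switchings at once}, since a single switching does not reveal the global structure; managing this interplay of quantifiers, rather than any isolated computation, is where the real work lies.
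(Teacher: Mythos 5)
The first thing to note is that the paper itself offers no proof of this statement: it is Theorem~\ref{thm-DR-character} precisely because it is imported wholesale from \cite{DR89}, and every later argument in the paper (Theorem~\ref{thm-deNM11}, Proposition~\ref{prop-deNM-tree-basic-1}, Theorem~\ref{mainTheorem}) treats it as a black box. So your proposal cannot be compared against an in-paper argument; it has to be judged as a reconstruction of the Danos--Regnier sequentialization theorem itself. Your overall architecture is the standard one, and the ``only if'' half is essentially complete as a sketch: the strengthened invariant (every DR-graph is a tree, i.e.\ acyclic \emph{and} connected) is the right one, the $\PAR$-rule adds a leaf to a tree, and the $\TENS$-rule joins two disjoint trees through one new vertex. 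The terminal-$\PAR$ case of the converse is also sound: the conclusion node of a terminal $\PAR$-link has degree one in every DR-graph, so removing the link performs a leaf deletion in every switching, and the induction hypothesis applies.

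The genuine gap is exactly where you locate ``the real work'': the existence of a splitting $\TENS$-link is the entire mathematical content of the theorem, and your proposal names it rather than proves it. Saying that if no terminal $\TENS$-link were splitting ``the empires of the terminal conclusions would overlap in a way forcing either a cycle in some DR-graph or a disconnection'' is a statement of hope, not an argument: nothing in the proposal defines $e_A$ precisely (e.g.\ as the connected component of $A$ under the switchings that, for every $\PAR$-link with exactly one premise reachable from $A$, point the switch away from $A$), nothing establishes that $e_A$ is well defined independently of auxiliary choices, and nothing establishes the two lemmas the classical proof actually turns on: (i) the nesting lemma, that for premises $A$, $B$ of distinct terminal $\TENS$-links the empires $e_A$ and $e_B$ are either disjoint or nested; and (ii) the maximality argument, that a terminal $\TENS$-link whose empire is maximal with respect to inclusion among all terminal $\TENS$-links must split $\Theta$. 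The proof is not a contradiction ranging over all terminal links at once, as your sketch suggests, but a constructive selection of an extremal one, and the contradiction with acyclicity or connectedness is derived only for that extremal link. Without (i) and (ii) --- or some replacement for them, such as Danos's contractibility argument or a graph-theoretic induction of the kind used in \cite{DR89} --- the converse direction remains open, and it is precisely the direction that makes this theorem a theorem.
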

\begin{figure}[htbp]
\begin{center}
  \includegraphics[scale=0.55]{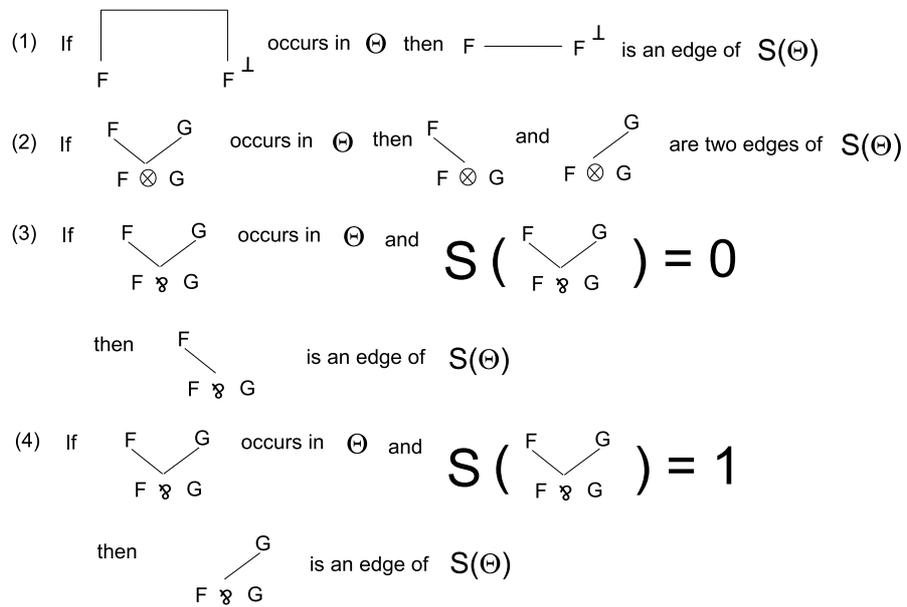}
\end{center}
 \caption{Definition of DR graphs}
 \label{figDRSwitchings}
\end{figure}
\subsection{de Naurois and Mogbil's correctness condition}
In this section we review de Naurois and Mogbil's correctness condition \cite{deNM11},
on which our linear time condition is based.
\begin{definition}
  A DR-switching $S$ for an MLL proof structure $\Theta$ is extreme left if
  for each $\PAR$-link $L$ in $\Theta$, $S$ always chooses the left premise in $L$.
  We denote the DR-switching by $S_{\forall \ell}$.
\end{definition}
In the following we only consider the extreme left switching.
We have no loss of generality under the assumption.
\begin{definition}
  \label{def-Consistency}
Let $\Theta$ be an MLL proof structure such that the DR-graph $S_{\forall \ell}(\Theta)$ is a tree. 
Let $L$ be a $\PAR$-link, and $n_{L}$, $n_{L}^{\ell}$ and $n_{L}^{r}$ be nodes in $S_{\forall \ell}(\Theta)$ induced by $L$, left, and right premises of $L$ respectively.
We say that $L$ is consistent in $S_{\forall \ell}(\Theta)$
if
the unique path $\theta$ from $n_{L}^{\ell}$ to $n_{L}^{r}$  in $S_{\forall \ell}(\Theta)$
does not contain $n_{L}$. 
\end{definition}
\begin{definition}
  \label{def-DirectedGraph}
  Let $\Theta$ be an MLL proof structure such that the DR-graph $S_{\forall \ell}(\Theta)$ is a tree
  and each $\PAR$-link in $\Theta$ is consistent in $S_{\forall \ell}(\Theta)$.
  Then we define a directed graph $G(S_{\forall \ell}(\Theta)) = (V, E)$ as follows:
  \begin{itemize}
  \item $V = \{ n_L \, | \, L \, \, \mbox{is a $\PAR$-link in} \, \, \Theta \}$
  \item 
    Let $L_1, L_2$ be different $\PAR$-links in $\Theta$.
    The directed edge $(n_{L_1}, n_{L_2})$ is in $E$ if
    the unique path from $n_{L_2}^{\ell}$ to $n_{L_2}^{r}$ in $S_{\forall \ell}(\Theta)$ contains
    the node $n_{L_1}$.
  \end{itemize}
\end{definition}

\begin{theorem}[\cite{deNM11}]
  \label{thm-deNM11}
  An MLL proof structure $\Theta$ is an MLL proof net iff
  \begin{enumerate}
  \item The DR-graph $S_{\forall \ell}(\Theta)$ is a tree.
  \item Each $\PAR$-link in $\Theta$ is consistent in $S_{\forall \ell}(\Theta)$. ($\PAR$-link consistency)
  \item The directed graph $G(S_{\forall \ell}(\Theta))$ is acyclic. (directed acyclicity)
  \end{enumerate}
\end{theorem}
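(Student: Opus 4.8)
The plan is to prove both directions against the Danos--Regnier criterion (Theorem~\ref{thm-DR-character}), reducing the quantification over \emph{all} DR-switchings to statements about the single switching $S_{\forall \ell}$. First I would record a counting normalization: every DR-switching $S$ produces a graph $S(\Theta)$ with the same vertex set and the same number of edges as $S_{\forall \ell}(\Theta)$, since switching a $\PAR$-link only replaces the edge to its left premise by the edge to its right premise. Hence, once condition~1 guarantees that $S_{\forall \ell}(\Theta)$ is a tree (so it has exactly $|V|-1$ edges), the graph $S(\Theta)$ is a tree if and only if it is acyclic if and only if it is connected, so I may work exclusively with acyclicity/connectivity. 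Second, I would isolate a \emph{single-flip lemma}: flipping exactly one $\PAR$-link $L$ from left to right in a tree $T$ again yields a tree if and only if $L$ is consistent in $T$ in the sense of Definition~\ref{def-Consistency}. The proof is the elementary component argument: deleting the switch edge $(n_L,n_L^\ell)$ splits $T$ into the component $A$ of $n_L$ and the component $B$ of $n_L^\ell$, and the new edge $(n_L,n_L^r)$ restores a tree precisely when $n_L^r\in B$, which is equivalent to the path $n_L^\ell\rightsquigarrow n_L^r$ avoiding $n_L$. A byproduct I will reuse is that under consistency the fundamental cycle of $L$ is $(n_L,n_L^\ell)$ followed by the tree path $n_L^\ell\rightsquigarrow n_L^r$ closed by $(n_L,n_L^r)$; thus an edge $(n_{L_1},n_{L_2})$ of $G(S_{\forall \ell}(\Theta))$ records precisely that $n_{L_1}$ lies on the fundamental cycle of $L_2$.

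For the forward direction, assume $\Theta$ is a proof net. Theorem~\ref{thm-DR-character} makes every $S(\Theta)$ a tree, so $S_{\forall \ell}(\Theta)$ is a tree (condition~1) and every single flip is a tree, whence the single-flip lemma gives condition~2. For condition~3 I would argue contrapositively: a directed cycle $n_{L_1}\to\cdots\to n_{L_k}\to n_{L_1}$ means each $n_{L_i}$ sits on the fundamental cycle of its successor, and from these overlapping fundamental cycles I would assemble the switching that flips exactly $L_1,\ldots,L_k$ and exhibit a genuine cycle in its DR-graph, contradicting Danos--Regnier.

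For the backward direction I would use acyclicity of $G(S_{\forall \ell}(\Theta))$ as a scheduling device. Given an arbitrary switching flipping a set $X$, I would flip its members one at a time in an order compatible with the DAG $G(S_{\forall \ell}(\Theta))$, maintaining a tree throughout: the single-flip lemma applies at each step provided the link being flipped is still consistent in the current tree, and the role of the directed edges is exactly to certify that flipping a link cannot destroy the consistency of a not-yet-flipped link scheduled later. By the normalization the resulting connected graph with $|V|-1$ edges is then a tree for every $X$, so every DR-switching is acyclic and connected and $\Theta$ is a proof net.

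The hard part will be the two places where directed edges of $G(S_{\forall \ell}(\Theta))$ are converted into, or out of, genuine cycles of DR-graphs: building an explicit cycle from a directed cycle in the forward direction, and proving in the backward direction that a topological flipping order never breaks consistency. Both hinge on controlling how the fundamental cycles --- equivalently, the consistency paths --- of the remaining $\PAR$-links mutate as edges are swapped. I expect the cleanest route is a cycle-space computation over $\mathbb{F}_2$: express any undirected cycle created by flipping a set $Y$ as the symmetric difference of the fundamental cycles $\{C_L : L\in Y\}$, observe that each removed left edge $(n_L,n_L^\ell)$ must cancel and hence lie on a second fundamental cycle $C_{L'}$, and read off from these coincidences a directed cycle of $G(S_{\forall \ell}(\Theta))$ --- which yields the backward direction by contraposition and, run in reverse, supplies the construction needed in the forward direction.
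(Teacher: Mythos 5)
The paper never proves this statement: Theorem~\ref{thm-deNM11} is imported from \cite{deNM11} as a black box (and then used to prove Proposition~\ref{prop-deNM-tree-basic-1} and Theorem~\ref{mainTheorem}), so there is no in-paper proof to compare yours against; your proposal supplies a self-contained derivation from Theorem~\ref{thm-DR-character}, which is a genuine addition. Its skeleton is sound. The counting normalization is correct, and so is the single-flip lemma: deleting $e_L=(n_L,n_L^\ell)$ splits the tree into the component $A$ of $n_L$ and the component $B$ of $n_L^\ell$, and consistency of $L$ is exactly $n_L^r\in B$. Your $\mathbb{F}_2$ argument settles the backward direction cleanly: a cycle $D$ in the DR-graph of the switching flipping $Y$ decomposes as $\bigoplus_{L\in Z}C_L$ with $Z=\{L\in Y : f_L\in D\}\neq\emptyset$, each $e_L$ with $L\in Z$ must cancel and so lies on some $C_{L'}$ with $L'\in Z\setminus\{L\}$, hence every vertex of $Z$ has an out-neighbour in $Z$, hence $G(S_{\forall\ell}(\Theta))$ has a directed cycle; contraposition plus the edge count finishes. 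Your scheduling argument is also salvageable, but only after you fix the order (flip heads before tails, i.e.\ reverse topological order) and prove the invariant that your ``certification'' claim silently needs: a not-yet-flipped link's consistency path never changes during the process, because a change would require the switch edge of an already-flipped link $L'$ to lie on that path, which would give an edge of $G$ forcing the affected link to have been flipped before $L'$. Without that invariant the worry that dependencies mutate as edges are swapped (which you raise yourself) is not answered.

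The real gaps are the two technical obligations in the forward direction (condition~3), which your sketch gestures at (``run in reverse'') but does not discharge. First, your byproduct lemma is stated at the level of nodes ($n_{L_1}$ lies on the fundamental cycle of $L_2$), but reversing the cycle-space computation needs the edge-level statement: a directed edge $(n_{L_1},n_{L_2})$ forces $e_{L_1}$ itself to lie on $C_{L_2}$. This is true, but it uses a structural fact nowhere in your proposal: under the extreme-left switching a $\PAR$-node $n_{L_1}$ has degree at most $2$, so a cycle through $n_{L_1}$ must use both incident edges, with a separate short argument when $n_{L_1}$ is an endpoint of the path (if $n_{L_1}=n_{L_2}^\ell$, consistency of $L_2$ forbids the path to start with $e_{L_2}$, so it starts with $e_{L_1}$; if $n_{L_1}=n_{L_2}^r$, the right-premise edge is absent, so $e_{L_1}$ is the only tree edge available). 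Second, flipping the vertex set of an arbitrary directed cycle of $G$ does not work: if some $e_{L_i}$ lies on three or more of the chosen fundamental cycles, it fails to cancel in the symmetric difference and the candidate cycle is not a subgraph of the flipped DR-graph. You must take a directed cycle of minimal length, hence chordless, so that each vertex has exactly one out-neighbour in the set and each $e_{L_i}$ cancels exactly; then $\bigoplus_i C_{L_i}$ is a nonempty cycle-space element of the flipped DR-graph, contradicting Theorem~\ref{thm-DR-character}. With these two repairs your outline closes completely.
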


\section{The Rewriting System over deNM-Trees}
In this section we introduce our rewriting system.
Then we give our correctness condition based on the system and
show that it is a characterization of MLL proof nets. 
\subsection{deNM-trees}
First we define deNM-trees.
In the following we fix an MLL proof structure $\Theta$ such that the DR-graph $S_{\forall \ell}(\Theta)$ is a tree.
\begin{definition}[deNM-trees]
  A deNM-tree is a finite tree consisting of labeled nodes and $\PAR$-nodes: 
  \begin{itemize}
  \item   A labeled node is labeled by a label set $S$
    whose each element is $l_L$ or $r_L$, where $L$ is a $\PAR$-link. 
    The degree $t$ of a labeled node is at most the number of nodes of the deNM-tree. 
    See Figure~\ref{figLabeledAndParNodes}.
  \item A $\PAR$-node is a labeled by a $\PAR$-link $L$.
    The degree of a $\PAR$-node is $1$ or $2$.
        See Figure~\ref{figLabeledAndParNodes}.
    As shown symbolically, we distinguish the port above of a $\PAR$-node from the port below. 
  \end{itemize}
  In a similar manner to Definition~\ref{def-Consistency}, we can define $\PAR$-consistency over deNM-trees.
  In addition, in a similar manner to Definition~\ref{def-DirectedGraph},
  a directed graph obtained from a deNM-tree and its acyclicity can be defined. 
\end{definition}
\begin{figure}[htbp]
\begin{center}
  \includegraphics[scale=0.6]{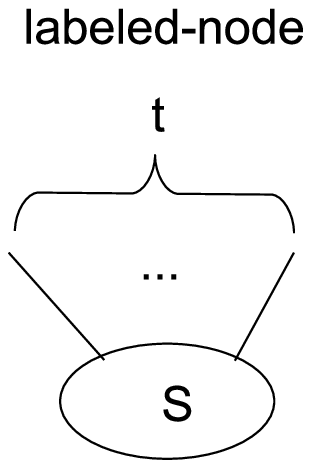}
  \includegraphics[scale=0.6]{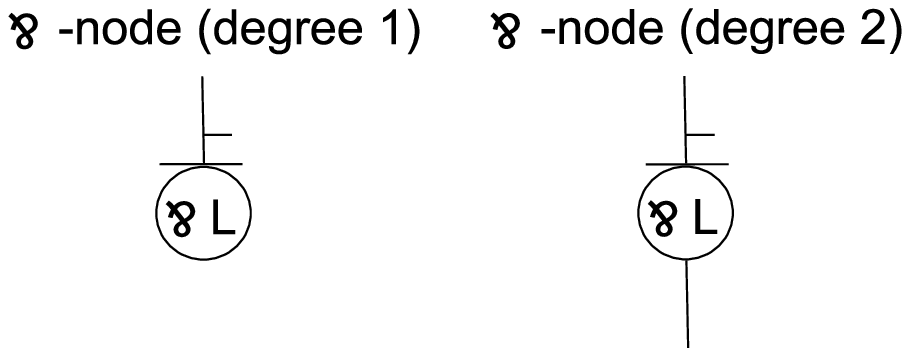}
\end{center}
 \caption{Labeled and $\PAR$-nodes}
 \label{figLabeledAndParNodes}
\end{figure}
Next we give a translation from $\Theta$ to a deNM-tree.
\begin{definition}
  We define a deNM-tree $T(\Theta)$ from $\Theta$ such that the DR-graph $S_{\forall \ell}(\Theta)$ is a tree as follows.
  If $\Theta$ consists of exactly one ID-link, then $T(\Theta)$ is a tree that consists of exactly one $0$-degree node labeled by $\emptyset$.
  Otherwise, for each link $L$ in $\Theta$ we specify a subtree $T_L$ in $T(\Theta)$ corresponding to $L$ as follows: 
\begin{itemize}
\item The case where $L$ is ID-link:
  \begin{enumerate}
    \item The case where one conclusion of $L$ is a right premise $F$ of a $\PAR$-link $L'$ or a conclusion $F$ of $\Theta$:
    Then $T_L$ consists of exactly one labeled node $n_L$ with degree $1$
    that is connected to the translation of the other conclusion $F^{\bot}$ of $L$
    (more precisely, $T_L$ is connected to the translation of the link whose left or right premise is $F^{\bot}$).
    Without loss of generality, we can assume that 
    $F^{\bot}$ is not a premise of $\PAR$-link 
    because otherwise, we can easily see that $\Theta$ is not an MLL proof net (in this case we define $T(\Theta)$ to be undefined).
    Then if the conclusion of $L$ is a right premise of $L'$, then the labeled set of $n_L$ is $\{ r_{L'} \}$.
    Otherwise, that of $n_L$ is empty.
    See Figure~\ref{figTranslatonID-linkCase1}.
    \begin{figure}[htbp]
      \begin{center}
        \includegraphics[scale=0.55]{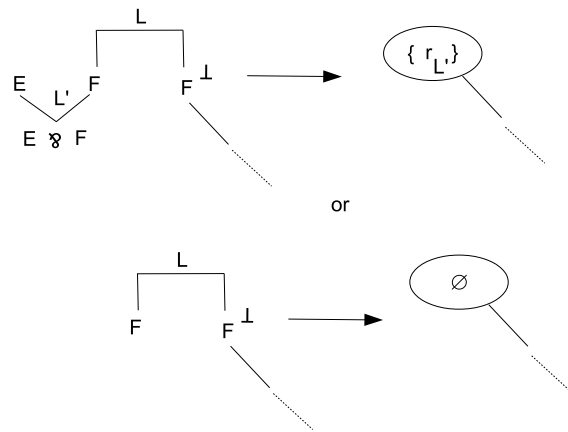}
      \end{center}
      \caption{ID-link (Case 1)}
      \label{figTranslatonID-linkCase1}
    \end{figure}
  \item Otherwise:
    In this case
    without loss of generality we can assume that one of the conclusions of $L$ is not a premise of a $\PAR$-link
    because when both conclusions of $L$ are a premise of a $\PAR$-link, we can easily see that $\Theta$ is not an MLL proof net (in this case we define $T(\Theta)$ to be undefined).
    Then $T_L$ consists of exactly one labeled node $n_L$ with degree $2$.
    If one of the conclusions of $L$ is a left premise of a $\PAR$-link $L'$, then
    the labeled set for $n_L$ is $\{ \ell_{L'} \}$.
    Otherwise the labeled set for $n_L$ is $\emptyset$.
    See Figure~\ref{figTranslatonID-linkCase2}.
    \begin{figure}[htbp]
      \begin{center}
        \includegraphics[scale=0.55]{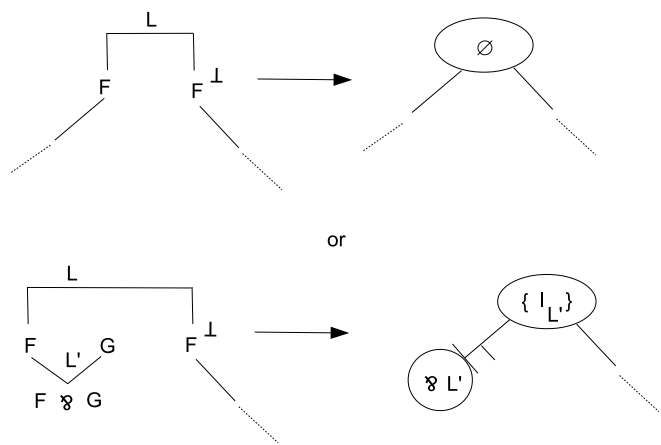}
      \end{center}
      \caption{ID-link (Case 2)}
      \label{figTranslatonID-linkCase2}
    \end{figure}
  \end{enumerate}
\item The case where $L$ is $\TENS$-link:
  \begin{enumerate}
  \item The case where the conclusion of $L$ is a conclusion of $\Theta$ or a right premise of a $\PAR$-link $L'$:
    In this case $T_L$ consists of exactly one labeled node $n_L$ with degree $2$
    that is connected to trees translated from both premises of $L$.
    If the conclusion of $L$ is a right premise of $L'$, then the labeled set for $n_L$ is $\{ r_{L'} \}$.
    Otherwise, the labeled set for $n_L$ is $\emptyset$.
    See Figure~\ref{figTranslatonTENS-linkCase1}.
    \begin{figure}[htbp]
      \begin{center}
        \includegraphics[scale=0.55]{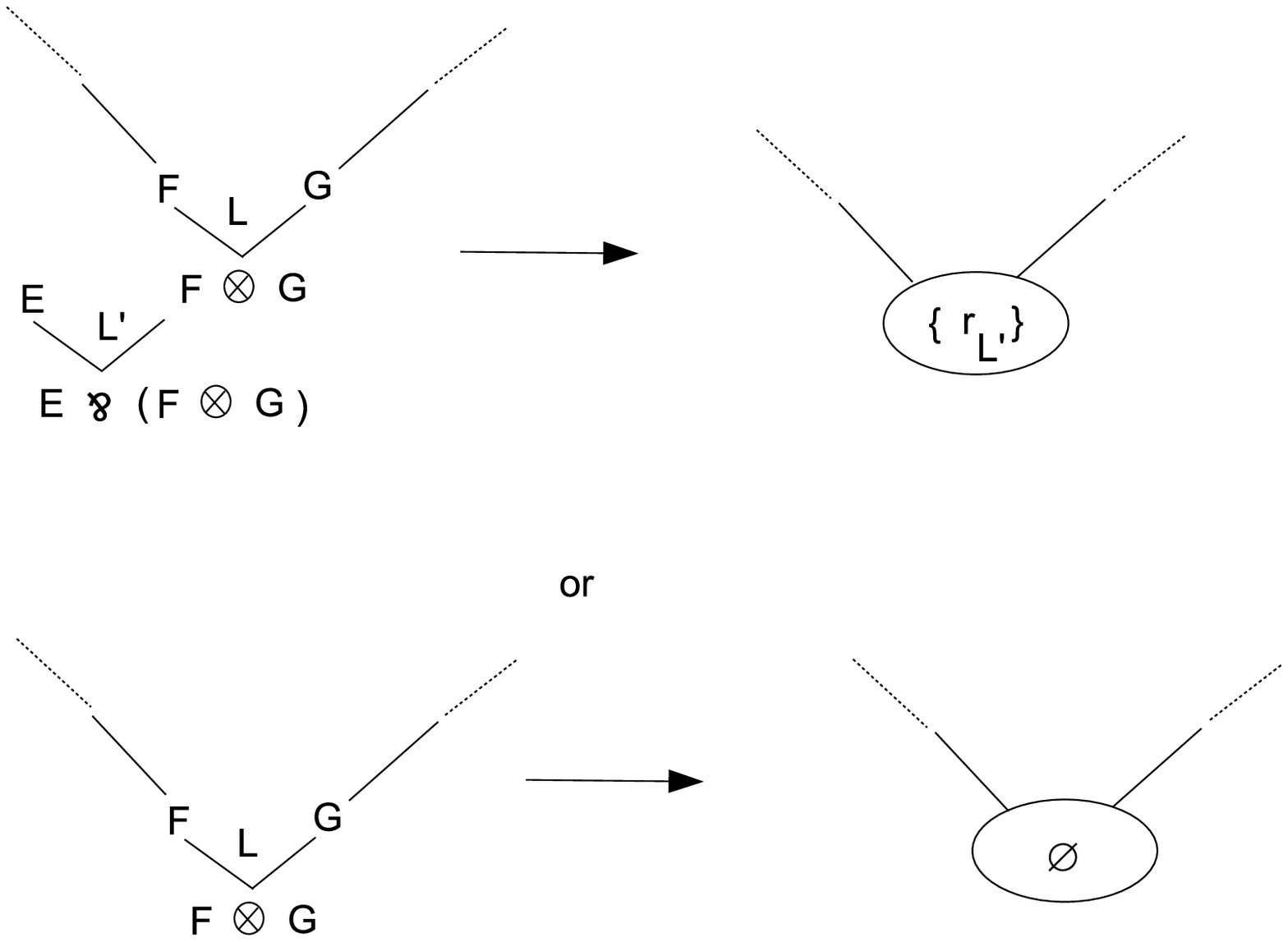}
      \end{center}
      \caption{$\TENS$-link (Case 1)}
      \label{figTranslatonTENS-linkCase1}
    \end{figure}
  \item Otherwise:
    In this case $T_L$ consists of exactly one labeled node $n_L$ with degree $3$.
    If the conclusion of $L$ is a left premise of a $\PAR$-link $L'$, then the labeled set for $n_L$ is $\{ \ell_{L'} \}$.
    Otherwise, the labeled set for $n_L$ is $\emptyset$.
    See Figure~\ref{figTranslatonTENS-linkCase2}.
    \begin{figure}[htbp]
      \begin{center}
        \includegraphics[scale=0.55]{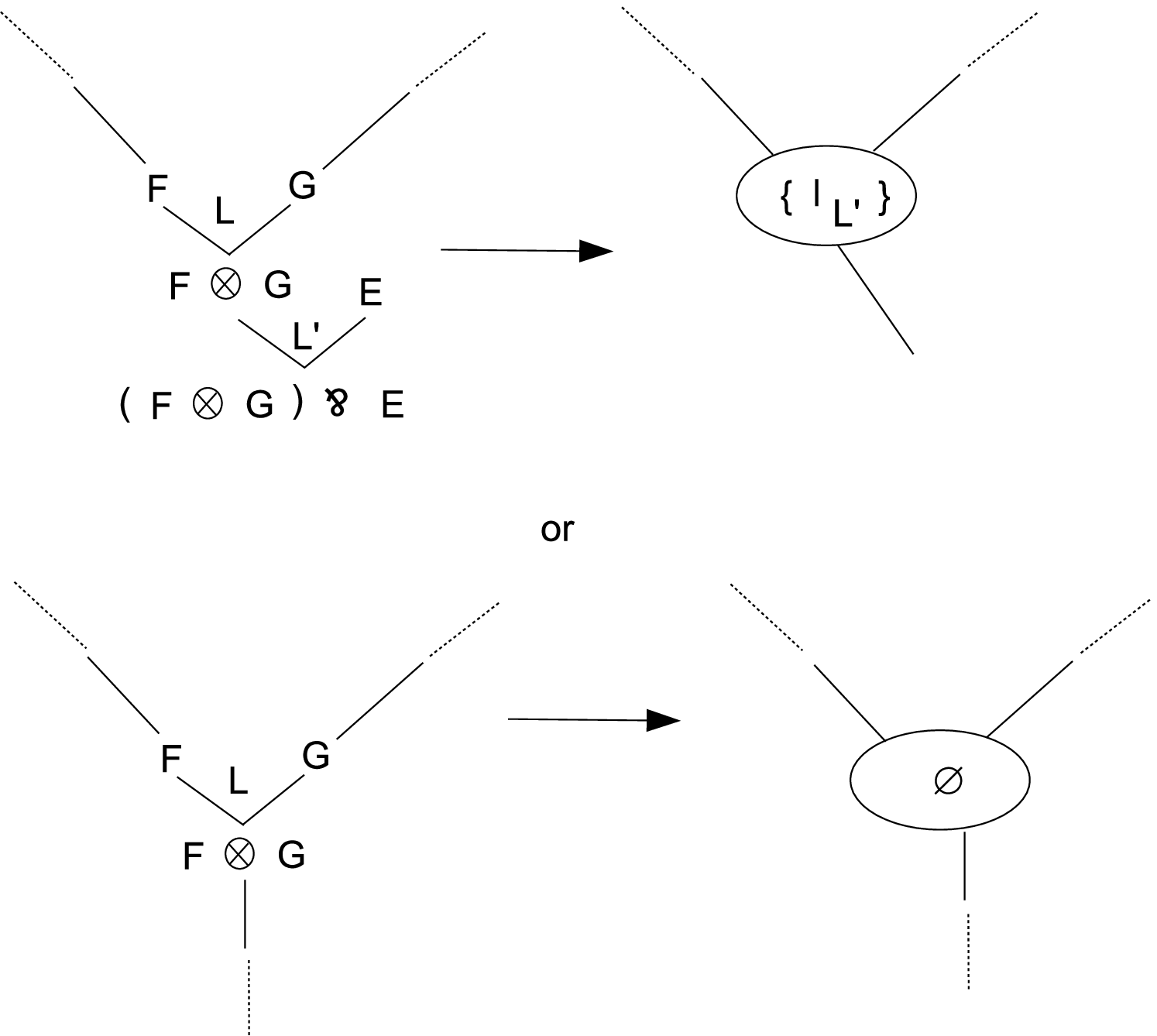}
      \end{center}
      \caption{$\TENS$-link (Case 2)}
      \label{figTranslatonTENS-linkCase2}
    \end{figure}
  \end{enumerate}
\item The case where $L$ is $\PAR$-link:
  \begin{enumerate}
  \item The case where the conclusion of $L$ is a right premise of a $\PAR$-link $L'$:
    In this case $T_L$ consists of one labeled node $n_1$ with degree $1$ labeled by $\{ r_{L'} \}$ and one $\PAR$-node $n_L$ labeled by $L$ with degree $2$
    such that $n_1$ and $n_L$ is connected.
    The node $n_L$ is connected to the tree translated from the left premise of $L$. 
    See Figure~\ref{figTranslatonPAR-linkCase1}.
      \begin{figure}[htbp]
        \begin{center}
          \includegraphics[scale=0.55]{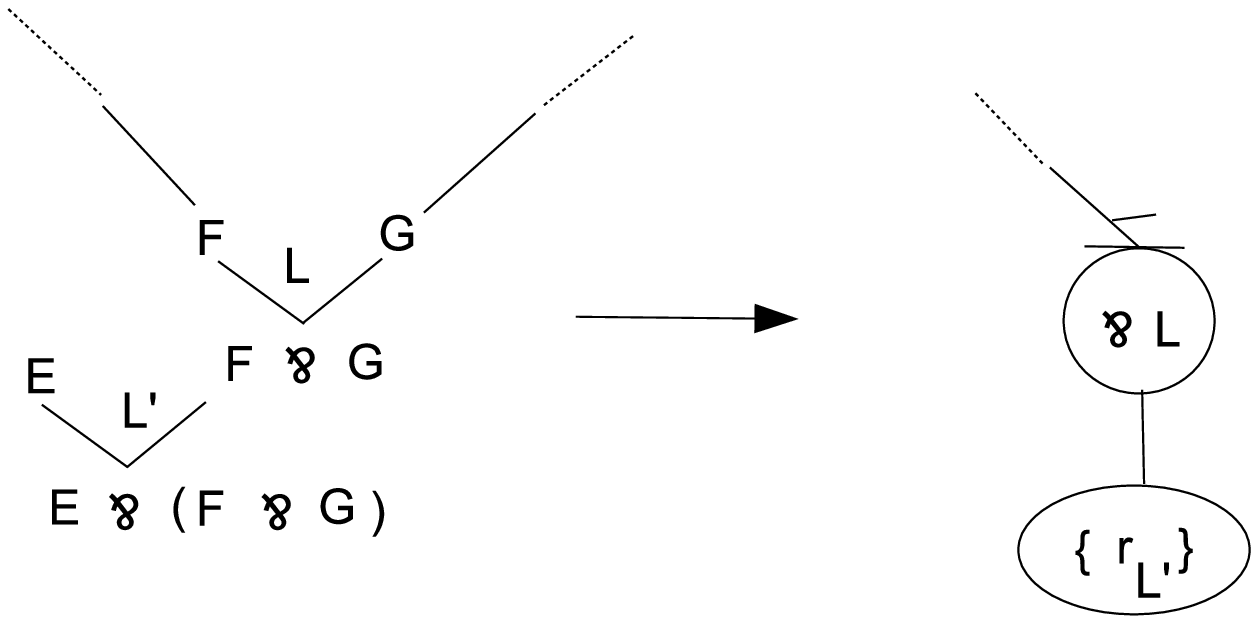}
        \end{center}
        \caption{$\PAR$-link (Case 1)}
        \label{figTranslatonPAR-linkCase1}
      \end{figure}
  \item 
    The case where the conclusion of $L$ is a left premise of a $\PAR$-link $L'$:
    In this case $T_L$ consists of one labeled node $n_2$ with degree $2$ labeled by $\{ \ell_{L'} \}$ and one $\PAR$-node $n_L$ labeled by $L$ with degree $2$
    such that $n_2$ and $n_L$ is connected.
    While $n_L$ is connected to the tree translated from the left premise of $L$,
    $n_2$ is connected to the tree translated from the conclusion of $L$.
    See Figure~\ref{figTranslatonPAR-linkCase2}.
    \begin{figure}[htbp]
      \begin{center}
        \includegraphics[scale=0.55]{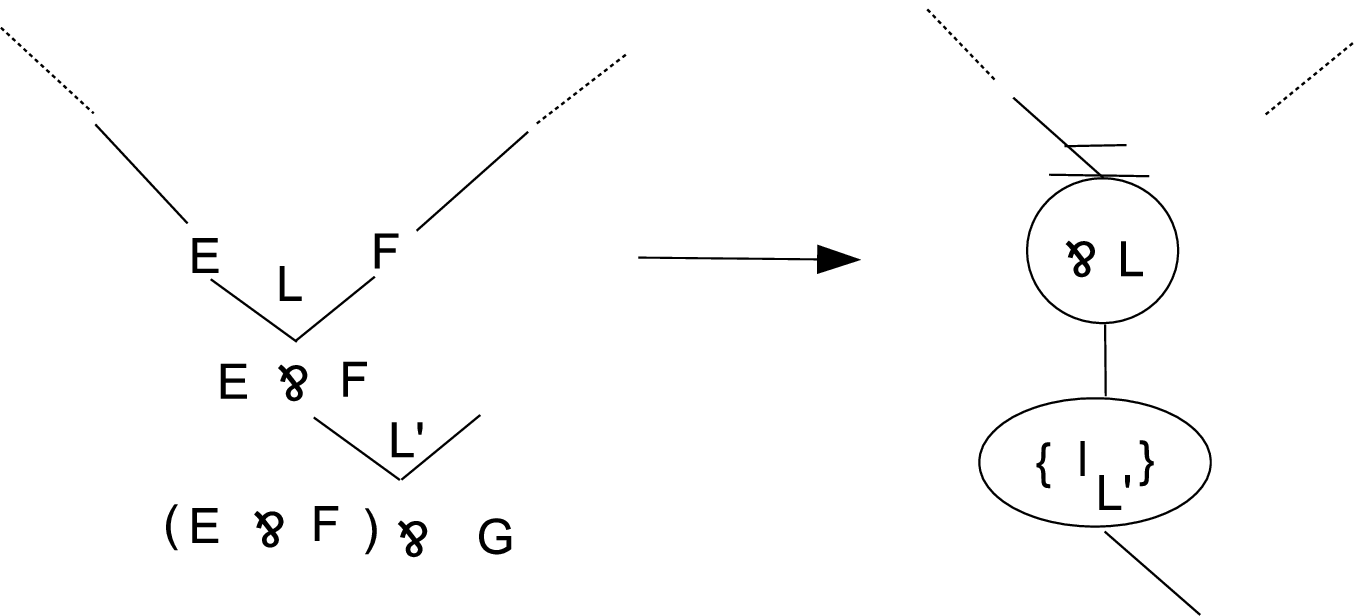}
      \end{center}
      \caption{$\PAR$-link (Case 2)}
      \label{figTranslatonPAR-linkCase2}
    \end{figure}
  \item Otherwise:
    In this case $T_L$ consists of exactly one $\PAR$ node $n_L$ labeled by $L$.
    If $L$ is a conclusion of $\Theta$, then the degree of $n_L$ is $1$.
    Otherwise, the degree of $n_L$ is $2$.
    See Figure~\ref{figTranslatonPAR-linkCase3}.
    \begin{figure}[htbp]
      \begin{center}
        \includegraphics[scale=0.55]{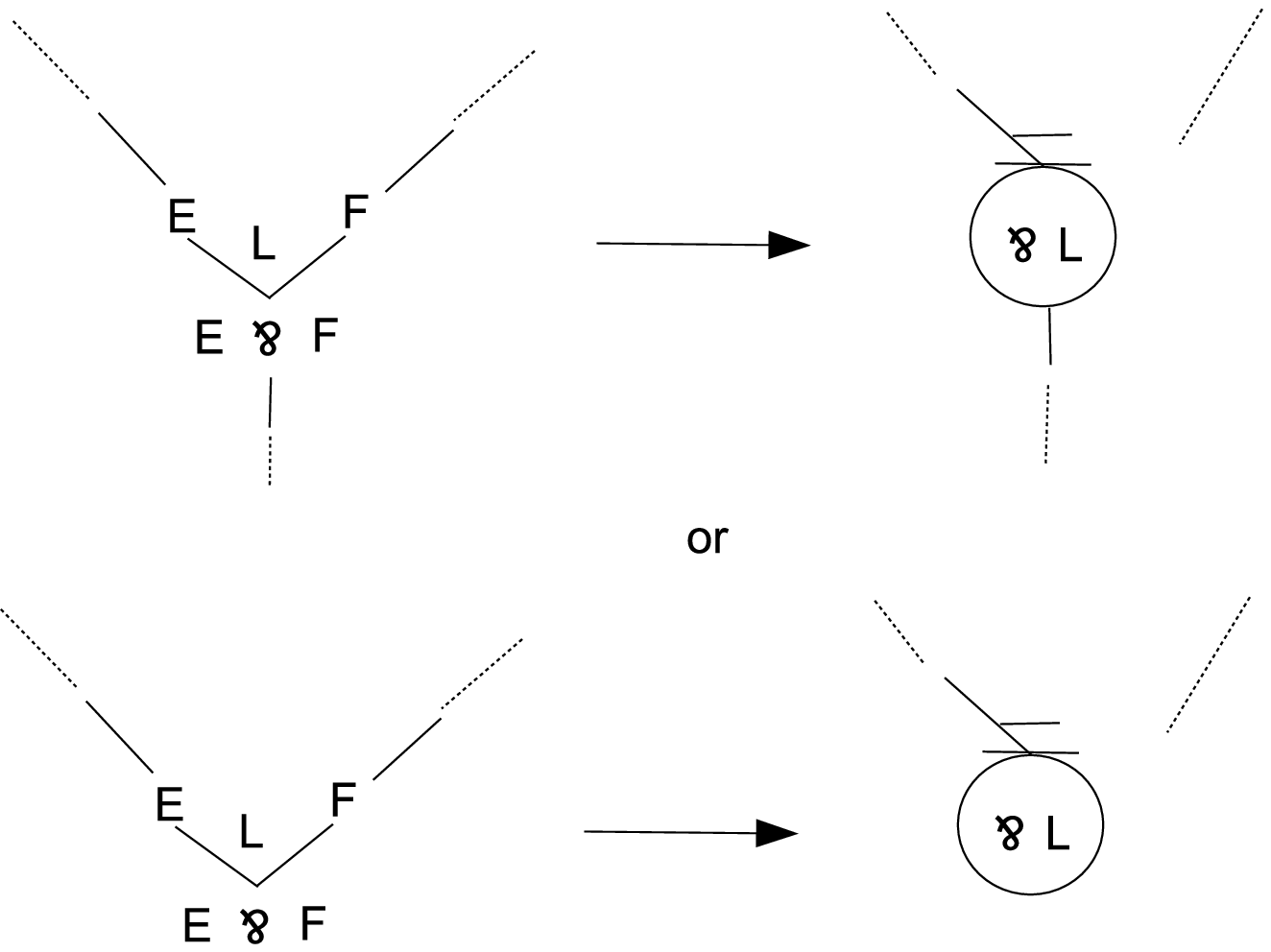}
      \end{center}
      \caption{$\PAR$-link (Case 3)}
 \label{figTranslatonPAR-linkCase3}
\end{figure}
  \end{enumerate}
\end{itemize}
Then $T(\Theta$) is the tree obtained by connecting these subtrees $T_L$.
\end{definition}
If $T(\Theta)$ is defined, then we can easily see that $T(\Theta)$ is a deNM-tree
because we assume that $S_{\forall \ell}(\Theta)$ is a tree.
\subsection{The Rewriting System over deNM-Trees}
Next we introduce our rewriting system over deNM-trees.
In the rewriting system we must specify exactly one node in a deNM-tree that is about to be rewritten, which we call the {\it active} node in the deNM-tree.
The active node must be a labeled node. 
Our rewriting system has only three rewrite rules. 
\begin{itemize}
\item The rewrite rule of Figure~\ref{figTranslatonRewriteRuleParElim} is called $\PAR$-{\it elimination}:
  If the active node $n$ is connected to a $\PAR$-node $n_L$ labeled by $L$ through the port above 
  and 
  the label set $S$ of $n$ contains labels $\ell_L$ and $r_L$, then $n_L$ is eliminated. 
  \begin{figure}[htbp]
    \begin{center}
      \includegraphics[scale=0.55]{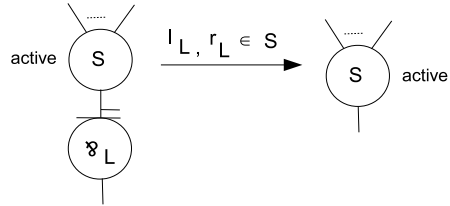}
    \end{center}
    \caption{$\PAR$-elimination rule}
    \label{figTranslatonRewriteRuleParElim}
  \end{figure}
  \item The rewrite rule of Figure~\ref{figTranslatonRewriteRuleUnion} is called {\it union}:
    If the active node is connected to a labeled node, then these two nodes are merged.
    The label set of the resulting node is the union of them of merged two nodes. 
    \begin{figure}[htbp]
      \begin{center}
        \includegraphics[scale=0.55]{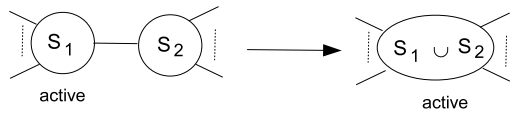}
      \end{center}
      \caption{Union rule}
      \label{figTranslatonRewriteRuleUnion}
    \end{figure}
  \item
    The rewrite rule called {\it local jump} of Figure~\ref{figTranslatonRewriteRuleLocalJump} 
    does not change any nodes:
    It just changes the current active node.
    Note that in this rewrite rule, the active node before the rewrite is connected to a $\PAR$-node $L$ through the port below
    and the active node after the rewrite is the labeled node whose label set contains $r_{L}$. 
    \begin{figure}[htbp]
      \begin{center}
        \includegraphics[scale=0.55]{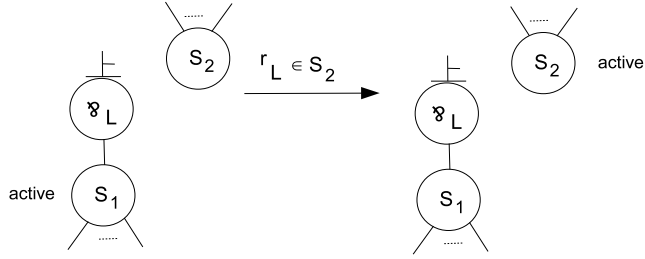}
      \end{center}
      \caption{Local jump rule}
      \label{figTranslatonRewriteRuleLocalJump}
    \end{figure}
\end{itemize}
We denote the rewriting system consisting of the three rewrite rules above by $\MCAL{R}$. 
\begin{proposition}
  \label{prop-deNM-tree-basic-1}
  Let $\Theta$ be an MLL proof structure such that $S_{\forall \ell}(\Theta)$ is a tree. 
  Then $\Theta$ is an MLL proof net iff $T(\Theta)$ satisfies $\PAR$-consistency and directed acyclicity for deNM-trees.
\end{proposition}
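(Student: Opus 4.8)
The plan is to reduce the statement to de Naurois and Mogbil's characterization (Theorem~\ref{thm-deNM11}) and then transport its two nontrivial conditions across the translation $T$. Since we already assume that $S_{\forall \ell}(\Theta)$ is a tree, condition~(1) of Theorem~\ref{thm-deNM11} holds by hypothesis, so that theorem tells us that $\Theta$ is an MLL proof net if and only if (i) every $\PAR$-link is consistent in $S_{\forall \ell}(\Theta)$ and (ii) the directed graph $G(S_{\forall \ell}(\Theta))$ is acyclic. It therefore suffices to prove that (i) is equivalent to $\PAR$-consistency of $T(\Theta)$ and that (ii) is equivalent to directed acyclicity of $T(\Theta)$. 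When $T(\Theta)$ is undefined, the translation clauses already exhibit a configuration (both conclusions of an ID-link, or an inadmissible premise, being a premise of a $\PAR$-link) that forces $\Theta$ not to be a proof net, so both sides of the biconditional fail; I would dispatch this case first and then assume $T(\Theta)$ is defined.

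The heart of the argument is a structural correspondence between $S_{\forall \ell}(\Theta)$ and $T(\Theta)$. I would build, by induction following the clause-by-clause definition of $T$, a matching that (a) identifies each $\PAR$-link $L$ of $\Theta$ with the $\PAR$-node of $T(\Theta)$ labeled by $L$, and (b) identifies the left-premise node $n_L^{\ell}$ and the right-premise node $n_L^{r}$ of $S_{\forall \ell}(\Theta)$ with the unique nodes of $T(\Theta)$ whose label sets contain $\ell_L$ and $r_L$ respectively. Since each formula occurrence is a premise of at most one link, every labeled node of $T(\Theta)$ carries at most one label in the untouched tree, so these ``$\ell_L$-node'' and ``$r_L$-node'' are well defined and unique. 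The translation is engineered exactly so that the node representing the left (respectively right) premise formula of $L$ receives the label $\ell_L$ (respectively $r_L$); this is precisely what the three premise cases of the ID-, $\TENS$- and $\PAR$-clauses record, and I would confirm it by inspecting each case. The auxiliary labeled nodes inserted in the two $\PAR$-clauses (the degree-$1$ node $n_1$ and the degree-$2$ node $n_2$) are exactly the marker nodes carrying the parent label $r_{L'}$ or $\ell_{L'}$; they sit on the edge immediately above the corresponding $\PAR$-node, and being labeled rather than $\PAR$-nodes they are transparent to every ``the path passes through a $\PAR$-node'' test, so their insertion preserves all incidence relations used by the two conditions.

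With this correspondence in hand, the two equivalences follow from the fact that $\PAR$-consistency and directed acyclicity are defined over deNM-trees by literally the same path conditions as in Definitions~\ref{def-Consistency} and~\ref{def-DirectedGraph}. The matching extends to a path-preserving correspondence (a tree homeomorphism that is an incidence-preserving bijection on $\PAR$-nodes and premise-marker nodes), so the unique path from $n_L^{\ell}$ to $n_L^{r}$ in $S_{\forall \ell}(\Theta)$ is carried to the unique path from the $\ell_L$-node to the $r_L$-node of $T(\Theta)$; since the matching sends $n_L$ to the $\PAR$-node labeled $L$, the first path avoids $n_L$ exactly when the second does, giving the equivalence of the two notions of $\PAR$-consistency. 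For the directed graphs, the edge $(n_{L_1}, n_{L_2})$ is present precisely when the $n_{L_2}^{\ell}$-to-$n_{L_2}^{r}$ path contains $n_{L_1}$, and by the same transport this holds in $S_{\forall \ell}(\Theta)$ if and only if the corresponding $\PAR$-node lies on the corresponding deNM-tree path; hence $G(S_{\forall \ell}(\Theta))$ and the directed graph extracted from $T(\Theta)$ are isomorphic, so one is acyclic iff the other is.

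The step I expect to be the main obstacle is making the structural correspondence of the second paragraph fully precise, because the two trees do not share a vertex set: $T$ collapses each ID- and $\TENS$-link to a single labeled node, splits some $\PAR$-links into a $\PAR$-node plus an auxiliary marker node, and---crucially---the extreme-left switching has already severed each $\PAR$-link from its right premise in $S_{\forall \ell}(\Theta)$, so one must track carefully through the $\TENS$- and $\PAR$-clauses how the subtree hanging from a right premise reattaches. The verification is a somewhat lengthy induction over the translation cases, but once the placement of the $\ell_L$- and $r_L$-labels and the transparency of the marker nodes to $\PAR$-node-incidence tests are established, the transport of paths---and hence of both conditions---is routine.
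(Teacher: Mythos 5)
Your proposal is correct and follows essentially the same route as the paper: the paper's entire proof is the one-line observation that the statement ``is obvious from Theorem~\ref{thm-deNM11},'' which is exactly your reduction, since condition~(1) of that theorem holds by hypothesis and the deNM-tree notions of $\PAR$-consistency and directed acyclicity are defined to mirror Definitions~\ref{def-Consistency} and~\ref{def-DirectedGraph}. Your explicit link-by-link correspondence and path-transport argument simply fills in the detail that the paper leaves implicit.
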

\begin{proof}
    It is obvious from Theorem~\ref{thm-deNM11}.
    $\Box$
\end{proof}
\begin{proposition}
  \label{prop-deNM-tree-basic-2}
  Let $T$ be an deNM-tree.
  Then let $T'$ be an deNM-tree obtained from $T$ by choosing one active node $n$ and applying one of three rewrite rules to $n$.
  \begin{enumerate}
  \item [(a)] If $T$ satisfies $\PAR$-consistency and directed acyclicity for deNM-trees, then $T'$ also satisfies them.
  \item [(b)] If $T$ does not satisfy $\PAR$-consistency, then $T'$ does not satisfy the property.
  \item [(c)] If $T$ does not satisfy directed acyclicity, then $T'$ does not satisfy the property.
  \end{enumerate}
\end{proposition}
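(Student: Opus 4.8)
The plan is to treat the three rewrite rules one at a time and, for each, verify all of (a)--(c). The common tool will be a simple observation about trees: if $T'$ is obtained from $T$ either by contracting an edge joining two labeled nodes or by removing a node $x$ of degree at most two (joining its two neighbours when it has degree two), then for any node $w$ distinct from $x$ and from the contracted endpoints, and any two nodes $y,z$, the node $w$ lies on the unique $y$--$z$ path after the operation if and only if it lies on the corresponding path before it. Since both $\PAR$-consistency and directed acyclicity are phrased entirely in terms of ``which $\PAR$-node lies on which path'' (the unique carrier of $\ell_{L'}$ plays the role of $n_{L'}^{\ell}$ and the unique carrier of $r_{L'}$ that of $n_{L'}^{r}$), this observation lets me track both properties across a rewrite step; I will also note that uniqueness of label carriers is itself preserved by all three rules. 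The \emph{local jump} rule is immediate: it changes only which node is active and leaves the underlying labeled tree, hence the directed graph and all relevant paths, unchanged, so $T$ and $T'$ satisfy exactly the same properties and (a)--(c) hold trivially.

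For the \emph{union} rule, $T'$ arises from $T$ by contracting the edge between the active node and an adjacent labeled node and unioning their label sets. No $\PAR$-node is removed or altered, and the carrier of each label either is unchanged or moves to the merged node, the image of its old carrier. By the path observation, for every $\PAR$-node $n_{L'}$ the relation ``$n_{L'}$ lies on the path from the carrier of $\ell_{L'}$ to the carrier of $r_{L'}$'' is unchanged, and likewise the relation defining each directed edge. Hence the directed graph of $T'$ equals that of $T$ and each $\PAR$-node keeps its consistency status; both properties are preserved in both directions, giving (a), (b) and (c) at once.

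The $\PAR$-\emph{elimination} rule is the substantive case. Its firing condition says that the active node $n$, joined to $n_L$ through the port above, carries \emph{both} $\ell_L$ and $r_L$, i.e.\ the carriers of $\ell_L$ and $r_L$ both coincide with $n$. Two consequences are crucial. First, the path from the carrier of $\ell_L$ to the carrier of $r_L$ is the single labeled node $n$, which contains no $\PAR$-node; so $L$ is already consistent in $T$ and, in the directed graph, $n_L$ has no incoming edge, i.e.\ in-degree $0$. Second, removing $n_L$ is either a leaf deletion or the contraction of a degree-$2$ node; it deletes exactly the vertex $n_L$ from the directed graph and, by the path observation, leaves every edge between surviving $\PAR$-nodes intact (carriers are labeled nodes, so they are untouched). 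For (a): deleting a vertex cannot create a directed cycle nor make a surviving $\PAR$-node inconsistent, and the only $\PAR$-node that disappears, $L$, was already consistent, so $T'$ inherits both properties. For (b): a $\PAR$-inconsistency of $T$ cannot sit at $L$, so it sits at some $n_{L'}$ with $L' \neq L$, and that inconsistency is preserved in $T'$ by the path observation.

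The main obstacle is part (c) for $\PAR$-elimination, since deleting a vertex from a directed graph can in general break a cycle. This is exactly where the in-degree $0$ property rescues the argument: a vertex of in-degree $0$ lies on no directed cycle, so any cycle witnessing the failure of acyclicity in $T$ avoids $n_L$ altogether, its edges all run between surviving $\PAR$-nodes, and it is therefore preserved in $T'$. I expect the writing effort to concentrate on (i) stating the path observation precisely enough to cover edge contraction and both shapes of $\PAR$-node removal, and (ii) confirming that ``$n_{L'}^{\ell}$'' and ``$n_{L'}^{r}$'' are well defined after several rewrites as the unique carriers of $\ell_{L'}$ and $r_{L'}$, with uniqueness verified as a rewriting invariant.
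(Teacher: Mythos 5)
Your proposal is correct and follows essentially the same route as the paper's proof, which is a terse rule-by-rule preservation argument: each rule preserves consistency/acyclicity and also preserves inconsistency/cyclicity, with the same key point that an inconsistent $\PAR$-link can never be the one removed by $\PAR$-elimination. You simply supply the details the paper leaves implicit --- in particular the path-preservation observation for edge contraction/node deletion in trees, and the in-degree-$0$ argument showing that deleting $n_L$ cannot break a directed cycle --- both of which are correct and are exactly what is needed to back the paper's bare assertions.
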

\begin{proof}
  \begin{enumerate}
  \item [(a)] Each rewrite rule preserves $\PAR$-consistency and directed acyclicity.
  \item [(b)] Each rewrite rule preserves $\PAR$-inconsistency.
    An inconsistent $\PAR$-link can not be removed by the $\PAR$-elimination rule. 
  \item [(c)] Each rewrite rule cannot cancel directed cyclicity.
  \end{enumerate}
\end{proof}
For example, let $\Theta_0$ be the MLL proof structure shown in Figure~\ref{fignonPN-0}, where 
the symbol $\circledcirc$ means a $\PAR$-link occurrence.
Then $\Theta_0$ is not an MLL proof net
because $T(\Theta_0)$ shown in Figure~\ref{figdeNMTree-0} satisfies $\PAR$-consistency, but not directed acyclicity.
Moreover whatever we choose any labeled node as the starting active node,
we cannot cancel directed cyclicity by applying one of three rewrite rules. 
\begin{figure}[htbp]
\begin{center}
  \includegraphics[scale=0.5]{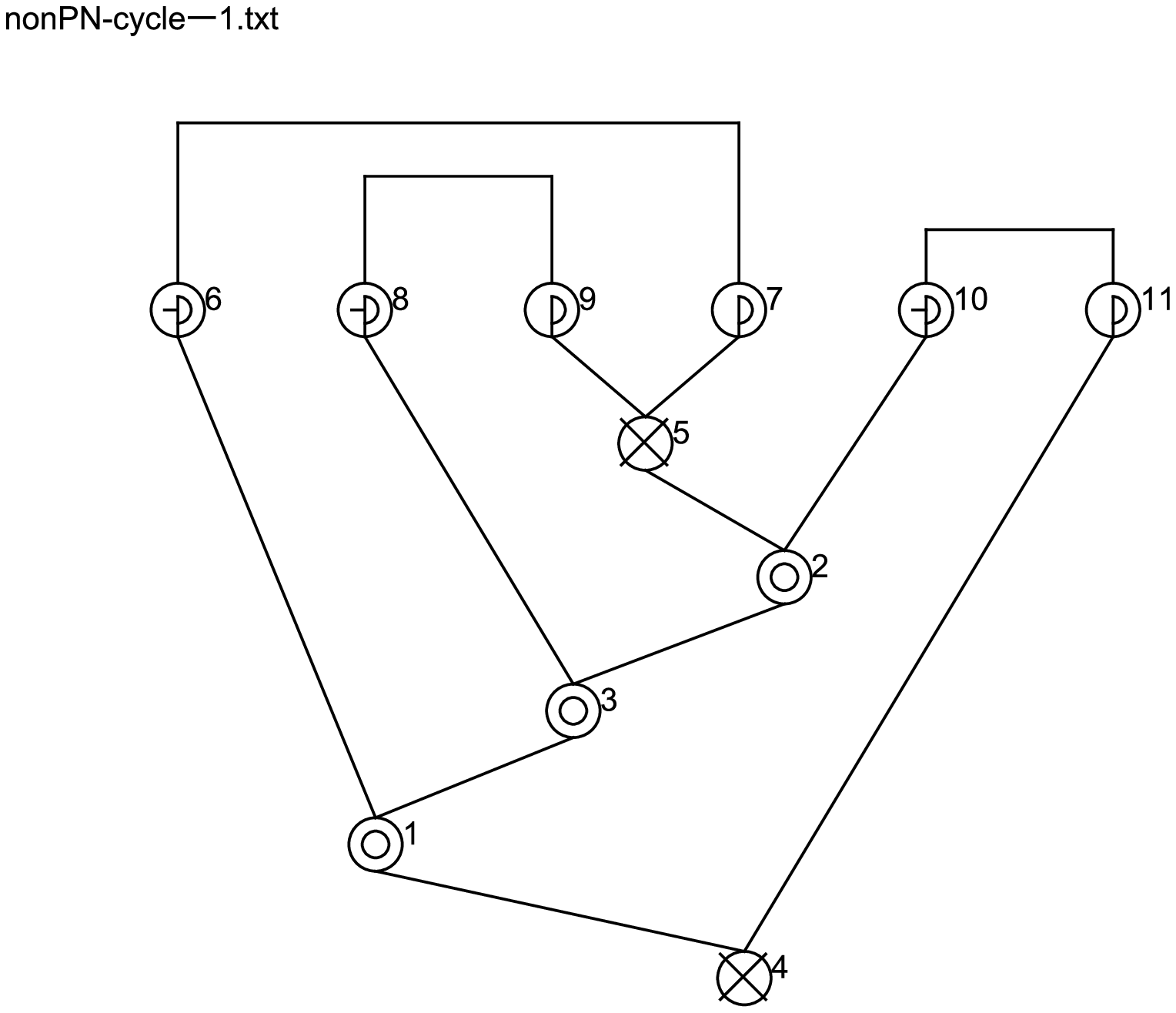}
\end{center}
 \caption{MLL Proof Structure $\Theta_0$, but not MLL Proof Net}
 \label{fignonPN-0}
\end{figure}
\begin{figure}[htbp]
\begin{center}
  \includegraphics[scale=0.5]{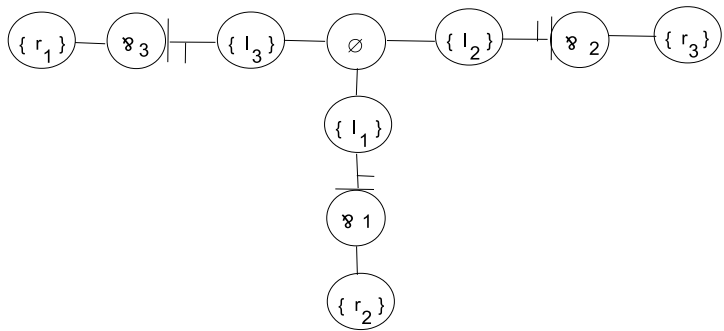}
\end{center}
 \caption{deNM-tree $T(\Theta_0)$, which does not satisfy directed acyclicity}
 \label{figdeNMTree-0}
\end{figure}

Let $\Theta$ be an MLL proof structure 
and $\MBB{L}_{\Theta} = \{ L_1, \ldots, L_m \}$ be the set of all $\PAR$-links in $\Theta$.
Then we define the full label set $S_{\rm full}$ to be
\[
S_{\rm full} = \{ \ell_L \, | \, L \in \MBB{L}_{\Theta} \} \cup \{  r_L \, | \, L \in \MBB{L}_{\Theta} \}
\]
\begin{definition}
  Algorithm $A$ is defined as follows:

    \begin{tabular}{ll}
    & {\bf Input}: an MLL proof structure $\Theta$\\
    & {\bf Output}: {\bf yes} or {\bf no}. \\
    1.& If the deNM-tree $T(\Theta)$ is not defined, then the output is {\bf no}. \\
      & Otherwise go to 2. \\
    2.& A labeled node $n$ in $T(\Theta)$ is selected arbitrarily. \\
    3.& Rewriting is started with $T(\Theta)$ and the active node $n$ using three \\
      & rewrite rules above. \\
    4.& If the local jump rule is applied to a $\PAR$-link to which the local \\
      & jump rule has been applied already, then the output is {\bf no}.\\
    5.& When any of three rewrite rules cannot be applied to the current \\
      & deNM-tree $T'$, if $T'$ consists of exactly one node labeled by \\
      & $S_{\rm full}$ with degree $0$, then the output is {\bf yes}. \\
      & Otherwise, the output is {\bf no}.
  \end{tabular}
\end{definition}
\begin{proposition}
  \label{prop-termination}
  Algorithm $A$ always terminates. 
\end{proposition}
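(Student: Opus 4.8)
The plan is to prove termination by exhibiting a quantity on the state of the computation that strictly decreases under every rule application and is bounded below, i.e.\ a well-founded measure. The state at each step consists of the current deNM-tree $T'$, the designated active node, and the set of $\PAR$-links to which the local jump rule has already been applied. First I would record how each of the three rules acts on $T'$: the $\PAR$-elimination rule deletes one $\PAR$-node, the union rule merges two labeled nodes into one and hence deletes one labeled node, while the local jump rule leaves the underlying tree completely unchanged and merely moves the active node. In particular no rule ever creates a node.

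Since two of the three rules strictly shrink the tree, the natural first component of the measure is $N$, the total number of nodes of $T'$; both $\PAR$-elimination and union strictly decrease $N$, and no rule increases it. The hard part will be the local jump rule, precisely because it does not change $N$ at all and so cannot be controlled by the size of the tree on its own. The key observation that rescues the argument is Step~4 of Algorithm $A$: a local jump is always triggered by a $\PAR$-link $L$ (the active node is attached to $n_L$ through the port below, and the jump relocates the active node to the labeled node carrying $r_L$), and as soon as the local jump rule is applied to a $\PAR$-link already used for a jump the algorithm halts with output \textbf{no}. Hence I would take as second component $U$, the number of $\PAR$-links not yet used as the target of a local jump. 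Every admissible local jump marks a fresh link and so strictly decreases $U$; the only jump that does not decrease $U$ is a repeated one, which by Step~4 terminates the run outright. Moreover $U$ never increases, because no $\PAR$-node is ever recreated, so a link once used can never be jumped afresh and no new un-used link can appear.

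Putting the pieces together, I would order states lexicographically by $(N, U)$ over the naturals, which is well-founded: $\PAR$-elimination and union strictly decrease $N$, the local jump rule fixes $N$ and strictly decreases $U$, and the single forbidden jump of Step~4 stops the computation. Thus every transition strictly decreases the lexicographic measure, no infinite sequence of rule applications is possible, and Algorithm $A$ terminates. Equivalently — and this is the bookkeeping form I would actually write out — the total number of $\PAR$-eliminations is at most the initial number of $\PAR$-nodes, the total number of unions is at most the initial number of labeled nodes, and the total number of local jumps is at most $|\MBB{L}_{\Theta}| + 1$; the run therefore consists of finitely many steps.
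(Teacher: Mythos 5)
Your proof is correct and follows essentially the same route as the paper's: the paper likewise observes that the local jump rule can be applied to each $\PAR$-link at most once (by Step~4) while the other two rules strictly reduce the number of nodes. You have merely made the underlying well-founded lexicographic measure $(N,U)$ explicit, which the paper leaves implicit.
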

\begin{proof}
  Algorithm $A$ cannot be applied the local jump rule to a $\PAR$-link more than one time.
  Both of the other two rules reduce the number of nodes in a deNM-tree. 
  $\Box$
\end{proof}
\begin{lemma}
  \label{lemma-Main}
  If Algorithm $A$ terminates in Step 5, then $\Theta$ is not an MLL proof net. 
\end{lemma}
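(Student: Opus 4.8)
The plan is to pass from the halting tree back to the two invariants of Theorem~\ref{thm-deNM11} using the propositions already proved, and then to argue that a genuinely stuck configuration is impossible once both invariants hold. First a clarification of the hypothesis: when Algorithm $A$ halts in Step 5 and returns {\bf no}, the final deNM-tree $T'$ is, by the wording of Step 5, not the single degree-$0$ node labeled $S_{\rm full}$; this is the case I analyze (for a single-node $T'$ the output would be {\bf yes}, which is the complementary statement). Since $T'$ is obtained from $T(\Theta)$ by finitely many applications of the three rules, Proposition~\ref{prop-deNM-tree-basic-2} makes each of the two properties invariant in both directions -- part (a) preserves them and parts (b),(c) preserve their failures -- so $T(\Theta)$ satisfies $\PAR$-consistency and directed acyclicity exactly when $T'$ does. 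By Proposition~\ref{prop-deNM-tree-basic-1}, $\Theta$ is therefore a proof net if and only if $T'$ satisfies both properties, and the lemma reduces to showing that this many-node normal form $T'$ must fail $\PAR$-consistency or directed acyclicity.

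I would prove the needed fact in the equivalent contrapositive form: \emph{a deNM-tree to which no rule of $\MCAL{R}$ applies at its active node, and which satisfies both $\PAR$-consistency and directed acyclicity, is a single node}. So assume $T'$ has more than one node but satisfies both properties, and read off what ``no rule applies at the active node $n$'' forces. Absence of the union rule makes every neighbor of $n$ a $\PAR$-node; absence of the local jump rule puts $n$ on the \emph{above} port of each such neighbor (never on a below port); and absence of $\PAR$-elimination means that for every adjacent $\PAR$-node $n_L$ the label set of $n$ omits $\ell_L$ or $r_L$. As $T'$ is not a single node, at least one such $\PAR$-neighbor $n_L$ exists, and $n$ is trapped above it.

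To convert this trap into a contradiction I would carry an invariant along the entire reduction: the label set accumulated at the active node records exactly the premise markers of the region already absorbed into it, and the token always sits at the frontier of the explored part of the tree, local jumps moving it from below a $\PAR$-node to the node carrying $r_L$ precisely as in the traversal implicit in Definitions~\ref{def-Consistency} and~\ref{def-DirectedGraph}. Granting this, a token stuck above $n_L$ and missing $\ell_L$ or $r_L$ means the corresponding premise of $L$ lies in a region the token can no longer reach; following the only remaining moves, and using directed acyclicity to bound the number of local jumps so that the jump sequence is finite, I would show that the surviving $\PAR$-nodes either close into a directed cycle or force the unique path between the two premises of some surviving $\PAR$-link to run through its own node. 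Either conclusion contradicts directed acyclicity or $\PAR$-consistency of $T'$, and the contrapositive is done; with the reduction of the first paragraph this yields that $\Theta$ is not an MLL proof net.

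The routine part is the first paragraph, where Propositions~\ref{prop-deNM-tree-basic-1} and~\ref{prop-deNM-tree-basic-2} do all the work. The genuine obstacle is the third: making the frontier/label invariant precise and proving that it is preserved by each of the three rules, and then extracting from a trapped active node an honest violation of one of the two correctness properties. This amounts to showing that the rewriting system faithfully simulates the path-and-dependency traversal underlying de Naurois and Mogbil's condition, so that ``no rule applicable while more than one node survives'' is never compatible with both $\PAR$-consistency and directed acyclicity.
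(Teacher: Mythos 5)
Your first two paragraphs are sound, and they essentially reproduce what the paper's own proof does with Figure~\ref{figMainLemma-1}: the reduction via Propositions~\ref{prop-deNM-tree-basic-1} and~\ref{prop-deNM-tree-basic-2} to the claim that a stuck tree $T'$ with more than one node must violate $\PAR$-consistency or directed acyclicity is valid, and your reading of stuckness (every neighbour of the active node $n$ is a $\PAR$-node met through its above port, and for each such $n_L$ the label set of $n$ omits $\ell_L$ or $r_L$) is correct. Your target dichotomy --- inconsistency \emph{or} a directed cycle --- is also the right one; it is in fact more faithful than the paper's own text, which derives its contradiction from consistency alone via an undirected cycle (Figure~\ref{figMainLemma-2}), even though a stuck configuration can be $\PAR$-consistent and fail only directed acyclicity, as happens for $\Theta_0$ of Figures~\ref{fignonPN-0} and~\ref{figdeNMTree-0}.

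The problem is that your third paragraph, which you yourself call the genuine obstacle, is the entire content of the lemma and is never proved: the ``frontier/label invariant'' is not formulated (and as stated it is dubious, since a local jump can move the active node to a node far from the absorbed region), and the claim that the remaining moves ``either close into a directed cycle or force inconsistency'' is asserted, not derived. The missing step is short and goes as follows. One checks the invariant, true of $T(\Theta)$ by the translation and preserved by all three rules, that the labelled node adjacent to the above port of any surviving $\PAR$-node $n_L$ contains $\ell_L$. Hence the stuck node $n$ contains $\ell_{L_i}$ and, since $\PAR$-elimination does not apply, omits $r_{L_i}$, for each of its neighbours $n_{L_1},\ldots,n_{L_k}$ (and $k \geq 1$ because $T'$ has more than one node). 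Suppose every $L_i$ were consistent in $T'$. The node $m_i$ carrying $r_{L_i}$ is then distinct from $n$, so the unique path from $n$ (the $\ell_{L_i}$-node) to $m_i$ leaves $n$ through one of its neighbours; by consistency it avoids $n_{L_i}$, so it passes through some $n_{L_{f(i)}}$ with $f(i) \neq i$, which is exactly the directed edge $(n_{L_{f(i)}}, n_{L_i})$ of Definition~\ref{def-DirectedGraph} adapted to $T'$. Iterating $f$ on the finite set $\{1,\ldots,k\}$ produces a repetition, hence a directed cycle, contradicting directed acyclicity. So some $L_i$ is inconsistent or the directed graph of $T'$ is cyclic, and your first paragraph then yields that $\Theta$ is not a proof net. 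Without this argument (or an equivalent one) your proposal stops exactly where the proof has to begin.
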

\begin{proof}
  In this case, $T(\Theta)$ must reduce to a deNM-tree with configuration shown in Figure~\ref{figMainLemma-1}.
  Then if $\Theta$ does not violate the second condition of Theorem~\ref{thm-deNM11}, i.e., the $\PAR$-link consistency, then 
  the configuration must extend to the configuration shown in Figure~\ref{figMainLemma-2}.
  But it is not a tree anymore since it has a cycle.
  Therefore $\Theta$ is not an MLL proof net. 
  $\Box$
\end{proof}
    \begin{figure}[htbp]
      \begin{center}
        \includegraphics[scale=0.55]{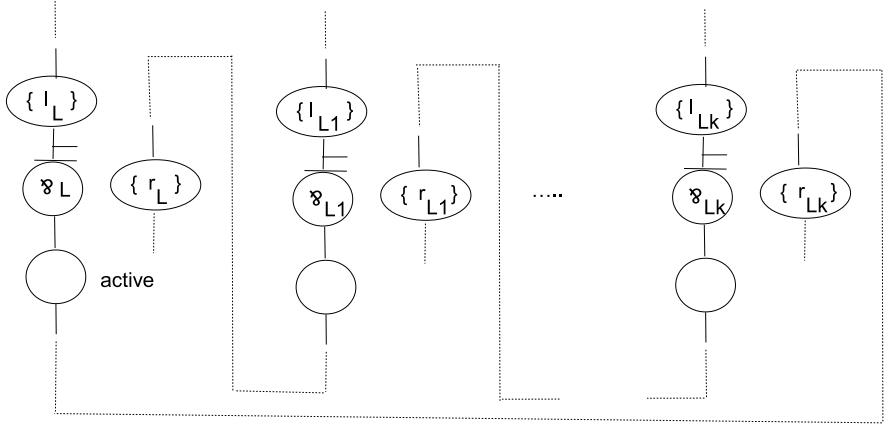}
      \end{center}
      \caption{Configuration for Termination at step 5}
      \label{figMainLemma-1}
    \end{figure}
    \begin{figure}[htbp]
      \begin{center}
        \includegraphics[scale=0.55]{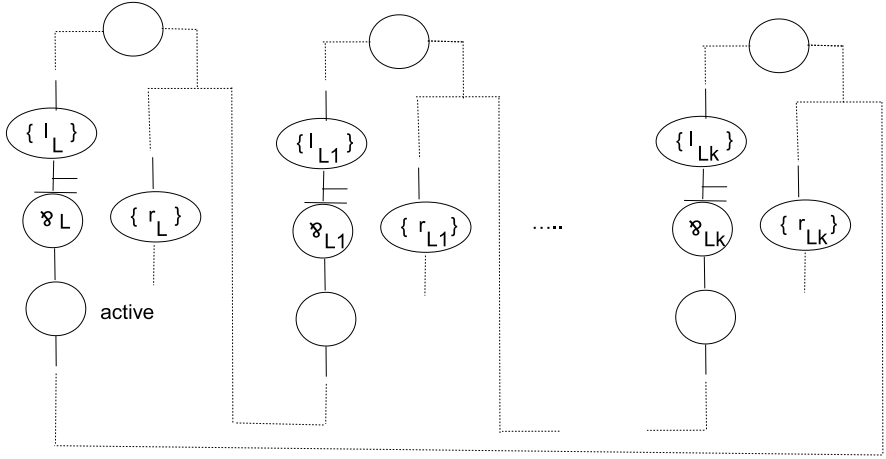}
      \end{center}
      \caption{But not a deNM-tree anymore}
      \label{figMainLemma-2}
    \end{figure}

\begin{theorem}
  \label{mainTheorem}
  Let $\Theta$ be an MLL proof structure.
  Then
  $\Theta$ is an MLL proof net if and only if Algorithm $A$ with input $\Theta$ outputs {\bf yes}.
\end{theorem}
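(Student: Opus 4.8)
The plan is to prove the two implications of the biconditional separately, using the two deNM-tree properties — $\PAR$-consistency and directed acyclicity — as a rewriting invariant. Write $P$ for the conjunction of these two properties. Proposition~\ref{prop-deNM-tree-basic-2} gives forward preservation ($P(T) \Rightarrow P(T')$ by part (a)) and, via the contrapositives of parts (b) and (c), backward preservation ($P(T') \Rightarrow P(T)$): if $P$ failed on $T$ then one of its two clauses would fail, and that clause would persist in $T'$. Hence $P$ is invariant along any rewriting sequence, so by induction on the number of steps $P(T(\Theta))$ holds if and only if $P$ holds of the deNM-tree at which the run halts. This invariance, together with Proposition~\ref{prop-deNM-tree-basic-1}, is the backbone of both directions.

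For the direction that a \textbf{yes} output forces $\Theta$ to be a proof net, I would argue as follows. A \textbf{yes} output means the run passes Step 1 (so $T(\Theta)$ is defined, and in particular $S_{\forall \ell}(\Theta)$ is a tree), never fires the \textbf{no} of Step 4, and halts at Step 5 on the single degree-$0$ node labeled $S_{\rm full}$. That terminal deNM-tree contains no $\PAR$-node, so it satisfies $P$ vacuously. By the backward preservation established above, $P(T(\Theta))$ holds, and Proposition~\ref{prop-deNM-tree-basic-1} then yields that $\Theta$ is an MLL proof net.

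For the converse, suppose $\Theta$ is a proof net. By Theorem~\ref{thm-deNM11} the graph $S_{\forall \ell}(\Theta)$ is a tree, so $T(\Theta)$ is defined (the explicit ``undefined'' clauses in the translation all describe configurations that are manifestly not proof nets) and Step 1 is passed; by Proposition~\ref{prop-deNM-tree-basic-1}, $P(T(\Theta))$ holds, and by invariance $P$ holds of every intermediate deNM-tree. Since Algorithm $A$ terminates (Proposition~\ref{prop-termination}), it halts with either \textbf{yes} or \textbf{no}, and it suffices to exclude every \textbf{no}-exit. The \textbf{no} of Step 5 is excluded by the contrapositive of Lemma~\ref{lemma-Main}: a \textbf{no}-halt there would imply that $\Theta$ is not a proof net. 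The \textbf{no} of Step 4 must also be excluded, and this is where the argument has real content: I would show that the sequence of $\PAR$-links visited by successive local jumps traces a directed path in the graph of Definition~\ref{def-DirectedGraph}, so that re-entering a $\PAR$-link already jumped would exhibit a directed cycle in the current deNM-tree, contradicting the directed acyclicity guaranteed by the invariant. With Steps 1, 4 and 5-\textbf{no} all excluded, the only remaining outcome is \textbf{yes}.

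The main obstacle is precisely this Step 4 analysis in the converse: making precise the correspondence between the trajectory of the active node under the local jump rule and the directed edges of $G(S_{\forall \ell}(\Theta))$, and thereby certifying that a repeated local jump witnesses a genuine directed cycle. The two preservation facts for $P$, Proposition~\ref{prop-deNM-tree-basic-1}, Lemma~\ref{lemma-Main} and termination handle everything else almost mechanically; by contrast this step is what ties the dynamic behaviour of the rewriting to the static directed-acyclicity condition, and it is the one place where the geometry of the deNM-tree, rather than mere bookkeeping, is doing the work.
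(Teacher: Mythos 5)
Your decomposition matches the paper's proof almost everywhere: the ``\textbf{yes} implies proof net'' direction via Propositions~\ref{prop-deNM-tree-basic-1} and~\ref{prop-deNM-tree-basic-2} is exactly the paper's if-part (the paper states it contrapositively: an inconsistent $\PAR$-link or a directed cycle persists under rewriting, so the run cannot end on a single node), and the exclusion of the Step~1 and Step~5 \textbf{no}-exits via Theorem~\ref{thm-DR-character}, Proposition~\ref{prop-termination} and Lemma~\ref{lemma-Main} is the paper's only-if part. The genuine gap is the one step you yourself flag as ``the main obstacle'' and leave unproved, namely the exclusion of the Step~4 exit, and the lemma you propose to close it --- that successive local jumps trace a directed path in $G(S_{\forall \ell}(\Theta))$ --- is false. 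Concretely, take ID-links $B, Y, X', F$; a $\TENS$-link $X$ joining $Y$'s $p^{\bot}$ with $X'$'s $p$; a $\PAR$-link $L_2$ with left premise the conclusion of $X$ and right premise $Y$'s $p$; a $\TENS$-link $C$ joining the conclusion of $L_2$ with $B$'s $p$; a $\PAR$-link $L_1$ with left premise $B$'s $p^{\bot}$ and right premise the conclusion of $C$; and a $\TENS$-link $W$ joining the conclusion of $L_1$ with $F$'s $p^{\bot}$, the conclusions of $\Theta$ being $W$'s conclusion, $X'$'s $p^{\bot}$ and $F$'s $p$. This $\Theta$ is sequentializable, and its deNM-tree is the path $F - W - n_{L_1} - B - C - n_{L_2} - X$ with leaves $Y$ and $X'$ attached to $X$, where $B, C, X, Y$ carry $\{\ell_{L_1}\}, \{r_{L_1}\}, \{\ell_{L_2}\}, \{r_{L_2}\}$ respectively, and $W$, $C$ sit at the below ports of $n_{L_1}$, $n_{L_2}$. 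Starting from $F$, one union makes the active node adjacent to $n_{L_1}$ from below; the local jump sends it to $C$, which is adjacent to $n_{L_2}$ from below, so the next move can (and under the priority strategy later used for Algorithm $B$, must) be a local jump at $L_2$. Yet $G(S_{\forall \ell}(\Theta))$ has no edges at all: the path from $B$ to $C$ avoids $n_{L_2}$, and the path from $X$ to $Y$ avoids $n_{L_1}$. Jumps are governed by the below-port geometry of the tree, not by the edges of $G$.

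The failure is also not easily patched inside $G$: if $M$ is the link jumped twice, then every local jump that carries the active region from one side of $n_M$ to the other can be checked to certify only edges of the form $(n_M, n_{L'})$, i.e.\ out-edges of $n_M$ in $G$ (the jumped link $L'$ has its left premise label on the active side of $n_M$ and its right premise label on the other side), so a repeated jump does not assemble into a directed cycle of $G$ in any direct way; reaching your promised contradiction with directed acyclicity would need a genuinely new idea. The paper closes this step by a different route: a second application of the local jump rule to the same $\PAR$-link yields a DR-switching $S$ --- intuitively, switch the jumped $\PAR$-links to their right premises, so that the trajectory of the active node between the two jumps at $M$ closes a cycle --- such that $S(\Theta)$ is not a tree, contradicting Theorem~\ref{thm-DR-character} rather than condition 3 of Theorem~\ref{thm-deNM11}. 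Everything else in your proposal is sound, but until this step is replaced by a correct argument, the ``proof net implies \textbf{yes}'' direction remains unproven.
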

\begin{proof}
  \begin{itemize}
  \item Only-if-part:
    By Proposition~\ref{prop-termination}, 
    Algorithm $A$ terminates. 
    Hence we can suppose that $\Theta$ is an MLL proof net and Algorithm $A$ with input $\Theta$ outputs {\bf no}.
    If the deNM-tree $T(\Theta)$ is not well-defined in Step 1, then it means that $S_{\forall \ell}(\Theta)$ is not a tree
    and contradicts Theorem~\ref{thm-DR-character}.
    Moreover application of the local jump rule to a $\PAR$-link twice in Step 4 means that
    we can find a DR-switching $S$ such that $S(\Theta)$ is not a tree.
    It also contradicts Theorem~\ref{thm-DR-character}.
    So Algorithm $A$ reaches Step 5.
    But it contradicts that Lemma~\ref{lemma-Main}. 
    So Algorithm $A$ must terminates with exactly one node tree with degree $0$.
    Moreover, the node must be labeled by $S_{\rm full}$.
  \item If-part:
    We suppose that Algorithm $A$ with input $\Theta$ outputs {\bf yes}.
    Then $\Theta$ automatically satisfies the first condition of Theorem~\ref{thm-deNM11}.
    That is, the deNM-tree $T(\Theta)$ must be well-defined. 
    We suppose that there is an inconsistent $\PAR$-link $L$ in $S_{\forall \ell}(\Theta)$.
    Then our rewriting system cannot be reduced $T(\Theta)$ to one node tree,
    because we cannot apply the $\PAR$-elimination rule to the $\PAR$-node $n_L$. 
    So $\Theta$ satisfies the second condition of Theorem~\ref{thm-deNM11}, i.e., the $\PAR$-link consistency. 
    We suppose that the directed graph $G(S_{\forall \ell}(\Theta))$ has a cycle.
    Then our rewriting system cannot be reduced $T(\Theta)$ to one node tree,
    because we cannot apply the $\PAR$-elimination rule to the $\PAR$-links which are contained in the cycle. 
    Hence $\Theta$ satisfies the third condition of Theorem~\ref{thm-deNM11}.
    Therefore $\Theta$ must be an MLL proof net. 
  $\Box$
  \end{itemize}
\end{proof}
\subsection{Examples}
We show three examples in this section.
Figure~\ref{figPN-1} shows an MLL proof net $\Theta_1$, where
the symbol $\circledcirc$ means a $\PAR$-link occurrence.
This figure has been generated using the Proof Net Calculator \cite{Mat19a}.
It is translated to the deNM-tree $T(\Theta_1)$ shown in Figure~\ref{figPN-1-Tree}.
When you choose any labeled node as the starting active node,
you must finally reach to one labeled node with degree $0$ labeled by the full label set
\[
S_{\rm full} = \{ \ell_1, r_1, \ell_2, r_2, \ell_3, r_3, \ell_4, r_4 \} 
\]
using our three rewrite rules.

\begin{figure}[htbp]
\begin{center}
  \includegraphics[scale=0.38]{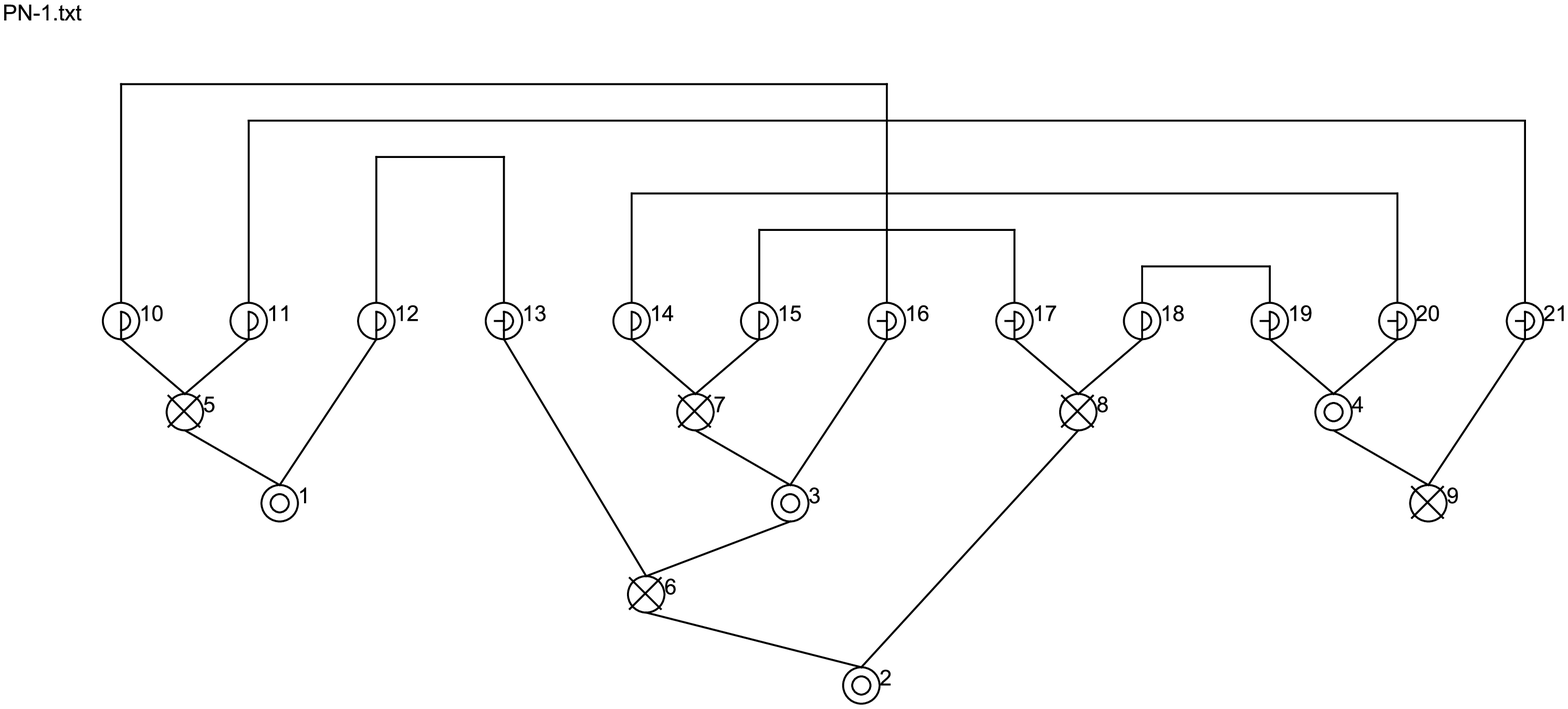}
\end{center}
 \caption{MLL Proof Net $\Theta_1$}
 \label{figPN-1}
\end{figure}
\begin{figure}[htbp]
\begin{center}
  \includegraphics[scale=0.45]{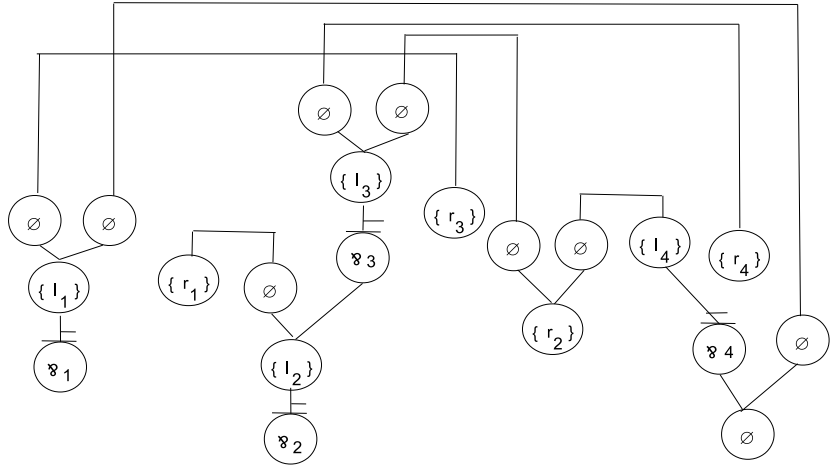}
\end{center}
 \caption{$T(\Theta_1)$}
 \label{figPN-1-Tree}
\end{figure}
Figure~\ref{fignonPN-1} shows an MLL proof structure $\Theta_2$ that is not an MLL proof net.
It is translated to the deNM-tree $T(\Theta_2)$ shown in Figure~\ref{fignonPN-1-Tree}.
When you choose any labeled node as the starting active node,
you cannot reach to one labeled node with degree $0$ labeled by the full label set
\[
S_{\rm full} = \{ \ell_1, r_1, \ell_2, r_2, \ell_3, r_3, \ell_4, r_4 \} 
\]
using our three rewrite rules.
\begin{figure}[htbp]
\begin{center}
  \includegraphics[scale=0.34]{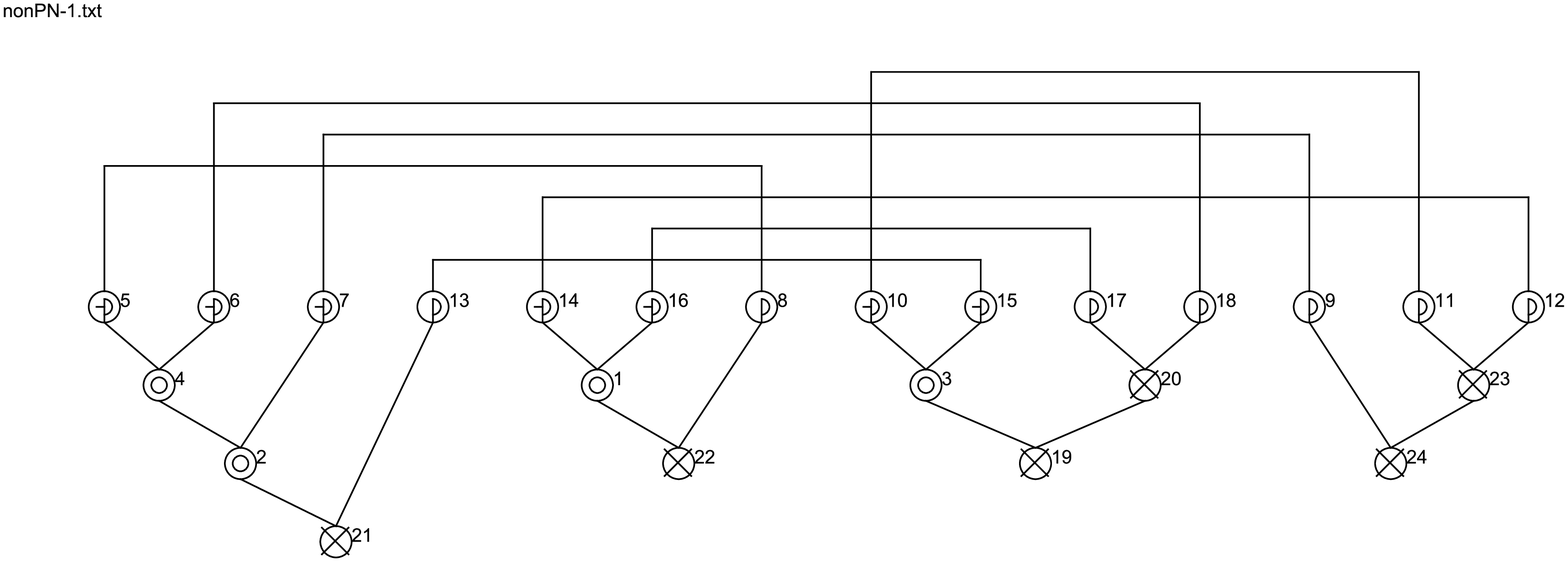}
\end{center}
 \caption{MLL Proof Structure $\Theta_2$, but not MLL Proof Net}
 \label{fignonPN-1}
\end{figure}
\begin{figure}[htbp]
\begin{center}
  \includegraphics[scale=0.42]{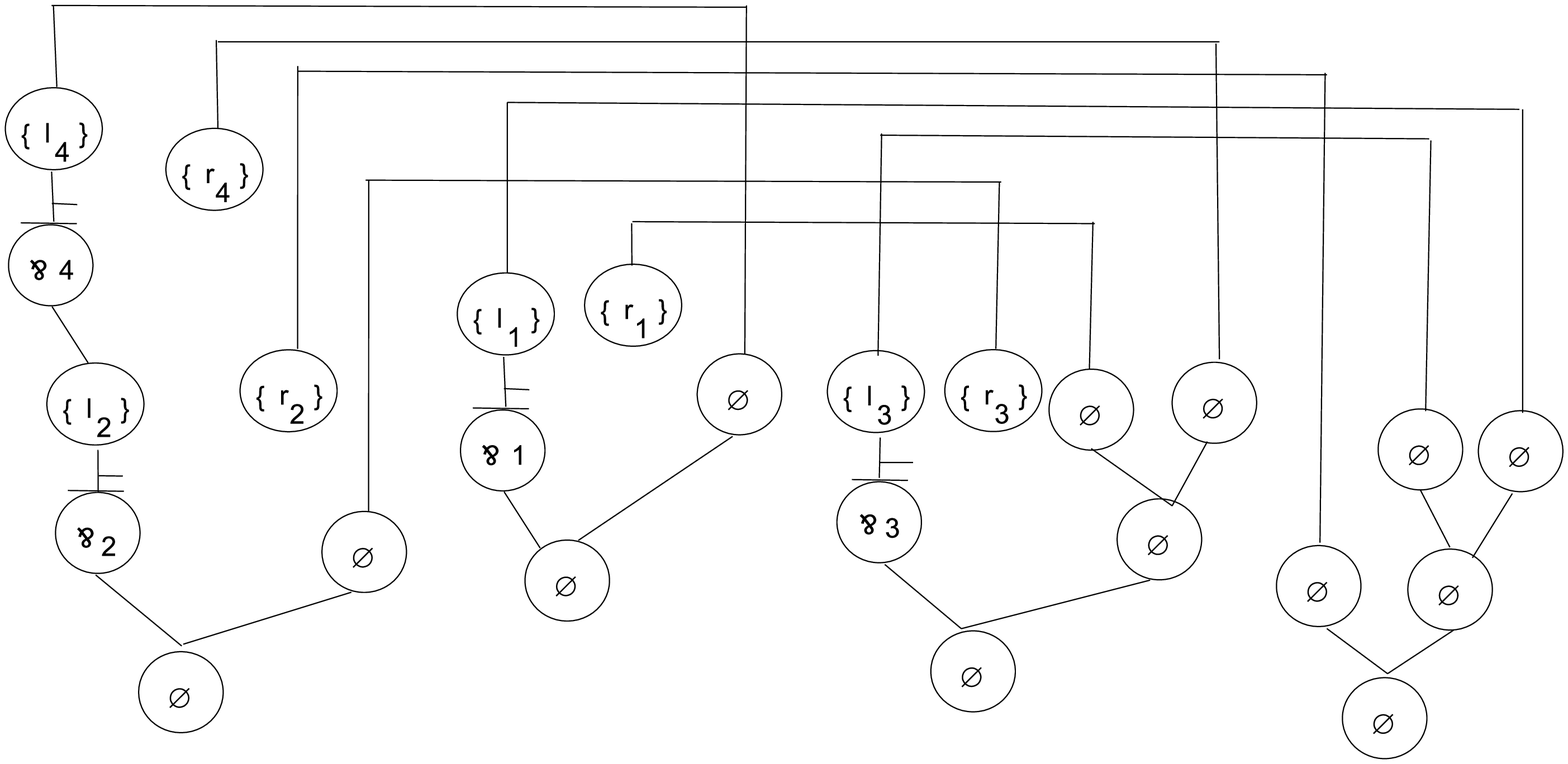}
\end{center}
 \caption{$T(\Theta_2)$}
 \label{fignonPN-1-Tree}
\end{figure}

Figure~\ref{fignonPN-loop-1} shows an MLL proof structure $\Theta_3$ that is not an MLL proof net.
It is translated to the deNM-tree $T(\Theta_3)$ shown in Figure~\ref{fignonPN-loop-1-Tree}.
When we choose the node labeled by $\{ r_2 \}$ as the starting rule,
the first rewrite rule may be the local jump rule for $\PAR$-link $1$.
Then the node labeled by $\{ r_2 \}$ becomes active.
After two applications of the union rule, the local jump rule for $\PAR$-link $1$ must be tried to be applied again.
Then step 5 in Algorithm $A$ outputs {\bf no}.
\begin{figure}[htbp]
\begin{center}
  \includegraphics[scale=0.5]{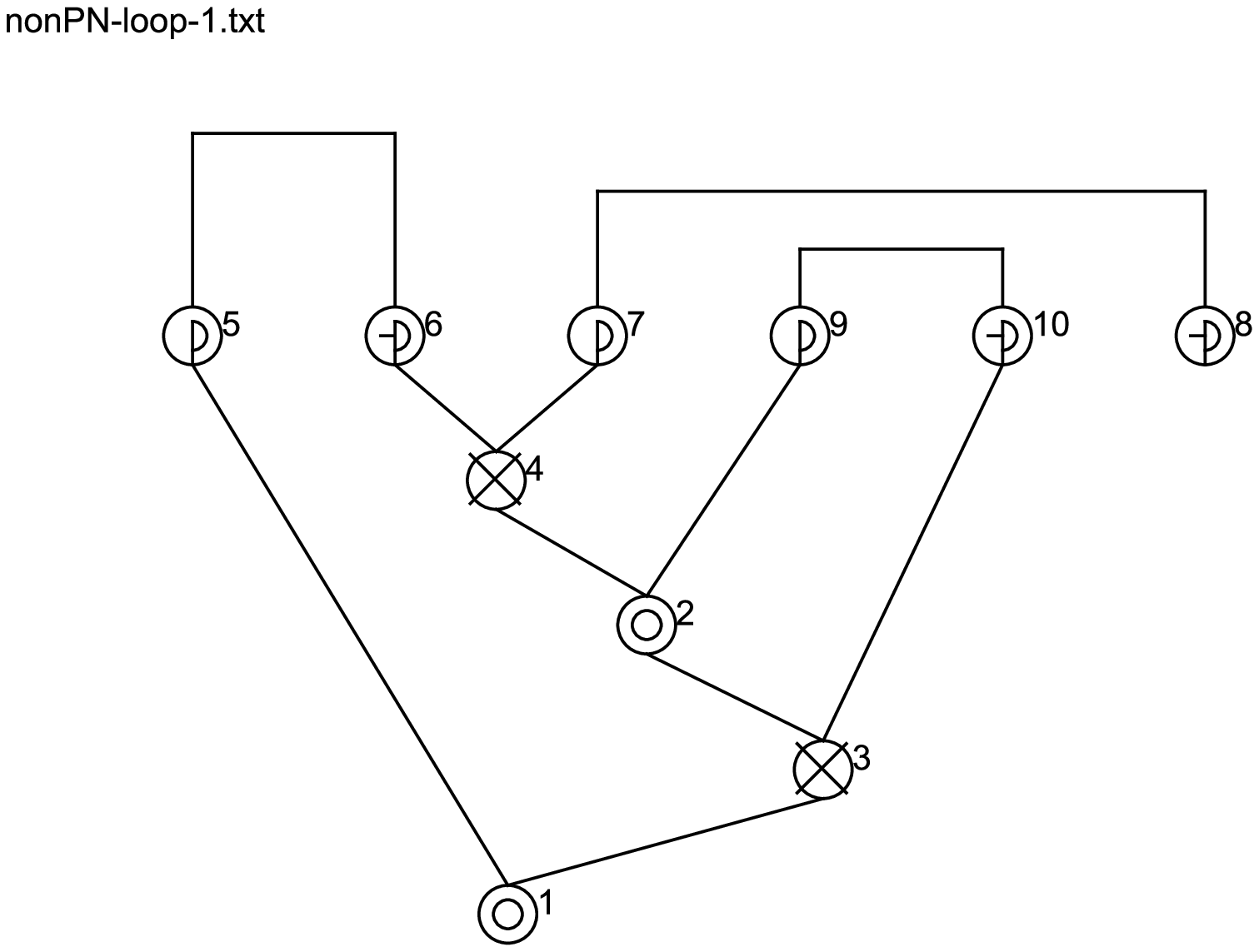}
\end{center}
 \caption{MLL Proof Structure $\Theta_3$, but not MLL Proof Net}
 \label{fignonPN-loop-1}
\end{figure}
\begin{figure}[htbp]
\begin{center}
  \includegraphics[scale=0.42]{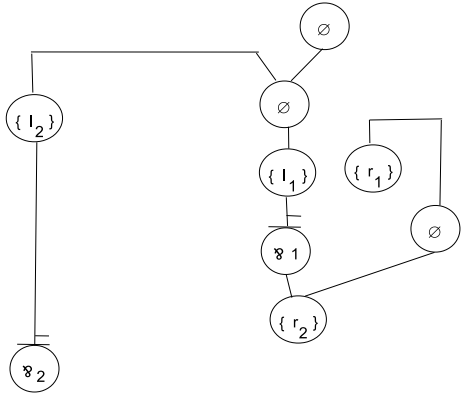}
\end{center}
 \caption{$T(\Theta_3)$}
 \label{fignonPN-loop-1-Tree}
\end{figure}

\section{Linear Time Correctness Condition}
Although our rewriting system $\MCAL{R}$ is surprisingly simple,
it cannot establish linear time termination,
because a node in a deNM-tree $T$ may have a degree depending on the number of nodes of $T$
and therefore take quadratic time in the worst case. 
For example, reduction of the deNM-tree shown in Figure~\ref{figQuadraticEx1} to one node tree 
may take quadratic time $O(k^2)$ in $\MCAL{R}$
because before each application of the union rule it may try to apply the $\PAR$-elimination rule
to the active node and $\PAR$-node $i \, (1 \le i \le k)$.
\begin{figure}[htbp]
\begin{center}
  \includegraphics[scale=0.5]{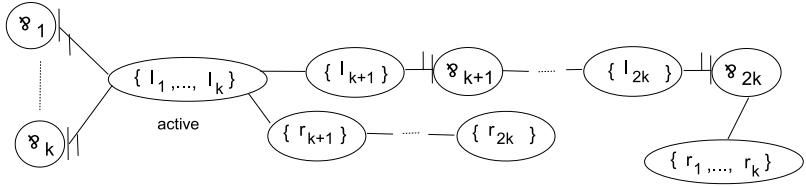}
\end{center}
 \caption{A deNM-Tree that may needs quadratic computation at the worst case scenario}
 \label{figQuadraticEx1}
\end{figure}
In order to establish linear time termination based on our rewriting system,
we must restrict a way of application of rewrite rules using more sophisticated data structures.

Let $\Theta$ be an MLL proof structure and
$T$ be a deNM-tree occurring during reduction, which starts from $T(\Theta)$. 
We assume that each labeled node $n$ in $T$ has the following data structures:
\begin{itemize}
\item The queue $\MCAL{Q}_{\rm down}$ of $\PAR$-nodes connected to $n$ from the port below.
\item The queue $\MCAL{Q}_{\rm labeled}$ of labeled-nodes connected to $n$.
\item The queue $\MCAL{Q}_{\rm right}$ of right premise labels $r_L$ included in the label set on $n$
  that have not been tried for $\PAR$-elimination yet
  or, have been put into $\MCAL{S}_{\rm right}$ once but have been put again by ``revival'' mechanism.
  Initially, if the label node is labeled by $r_L$, then $\MCAL{Q}_{\rm right} = r_L$.
  Otherwise, $\MCAL{Q}_{\rm right}$ is empty. 
\item The set $\MCAL{S}_{\rm right}$ of right premise labels $r_L$ included in the label set on $n$
  that have already been tried for $\PAR$-elimination, where
  the set is a partition in a disjoint set-union data structure \cite{CLRS09}.
  Initially, $\MCAL{S}_{\rm right}$ is always empty.
  We call $\MCAL{S}_{\rm right}$ the {\it right premise label set} for $n$.
\item The queue $\MCAL{Q}_{\rm up}$ of $\PAR$-nodes connected to $n$ from the port above that have not been tried for $\PAR$-elimination yet
  or, have been put into $\MCAL{S}_{\rm up}$ once but have been put again by ``revival'' mechanism..
\item The set $\MCAL{S}_{\rm up}$ of $\PAR$-nodes connected to $n$ from the port above
  and have already been tried for $\PAR$-elimination, where
  the set is a partition in a disjoint set-union data structure \cite{CLRS09}.
  We call $\MCAL{S}_{\rm up}$ the {\it up port set} for $n$.
  Initially, $\MCAL{S}_{\rm up}$  is always empty.
\item The set $\MCAL{S}_{\cup {\rm right}}$, whose purpose is to avoid deadlock of $\PAR$-elimination application.
  Initially $\MCAL{S}_{\cup {\rm right}}$ is empty.
  The set $\MCAL{S}_{\cup {\rm right}}$ behaves like $\MCAL{S}_{{\rm right}}$  initially,
  but in the case of application of the local jump rule,
  $\MCAL{S}_{\cup {\rm right}}$ in the previous active node is merged into that of the new active node. 
  We call $\MCAL{S}_{\cup {\rm right}}$ the {\it merged right premise label set} for $n$. 
\item The set $\MCAL{S}_{\cup {\rm up}}$, whose purpose is to avoid deadlock of $\PAR$-elimination application.
  Initially $\MCAL{S}_{\cup {\rm up}}$ is empty.
  The set $\MCAL{S}_{\cup {\rm up}}$ behaves like $\MCAL{S}_{{\rm up}}$  initially,
  but in the case of application of the local jump rule,
  $\MCAL{S}_{\cup {\rm up}}$ in the previous active node is merged into that of the new active node. 
  We call $\MCAL{S}_{\cup {\rm up}}$ the {\it merged up port set} for $n$. 
\end{itemize}
In the initial stage, we can associate these data structures to each labeled node in $T(\Theta)$ in linear time. 

Let the current active node in $T$ be $n_{\rm act}$.
Then we define our reduction strategy as follows:
\begin{enumerate}
\item First if  $\MCAL{Q}_{\rm down}$ for $n_{\rm act}$ is not empty, then
  the local jump rule is applied to $n_{\rm act}$ and the first element of $\MCAL{Q}_{\rm down}$.
  Before the application, the first element is deleted from $\MCAL{Q}_{\rm down}$.
  This deletion can be done in constant time.
  Moreover $\MCAL{S}_{\cup {\rm up}}$ and $\MCAL{S}_{\cup {\rm right}}$ for the previous active node
  is merged into $\MCAL{S}_{\cup {\rm up}}$ and $\MCAL{S}_{\cup {\rm right}}$ for the new active node respectively. 
\item Second in the case where  $\MCAL{Q}_{\rm down}$ for $n_{\rm act}$ is empty and $\MCAL{Q}_{\rm labeled}$ for $n_{\rm active}$ is not empty, 
  if the first element $n'$ of $\MCAL{Q}_{\rm labeled}$ does not denote itself, i.e, $n_{\rm act}$, then 
  the union rule is applied to $n_{\rm act}$ and $n'$.
  After the application, the data structures for two nodes $n_{\rm act}$ and $n'$ are merged in such a way that 
  $n'$ is deleted from $\MCAL{Q}_{\rm labeled}$.
  These merges can be done in constant time. 
  Otherwise, i.e., if the first element of $\MCAL{Q}_{\rm labeled}$ denote itself, then
  the element is deleted from $\MCAL{Q}_{\rm labeled}$ and return to the beginning of this step.
\item Third in the case where both $\MCAL{Q}_{\rm down}$ and $\MCAL{Q}_{\rm labeled}$ for $n_{\rm act}$ are empty,
  if $n_{\rm act}$ is one node tree, then the output is {\bf yes}.
  Otherwise, one of the following cases is tried to be applied in order: 
  \begin{itemize}
  \item The case where both $\MCAL{Q}_{\rm right}$ and $\MCAL{Q}_{\rm up}$ are empty: Then the output is {\bf no}.
  \item The case where $\MCAL{Q}_{\rm right}$ is not empty:
    Then  
    let the first element of $\MCAL{Q}_{\rm right}$ be $r_L$.
    If $\MCAL{S}_{\rm up}$ includes $\PAR$-link $L$,
    then the $\PAR$-elimination rule is applied to $n_{\rm act}$ and $L$.
    Then $r_L$ is deleted from $\MCAL{Q}_{\rm right}$ and
    when $\PAR$-link $L$ has the labeled node $n'$ connected to the port below,
    $n'$ is appended to $\MCAL{Q}_{\rm labeled}$ for $n_{\rm act}$. 
    These operations can be done in constant time.
    Otherwise, i.e., if $\MCAL{S}_{\rm up}$ does not includes $\PAR$-link $L$, then
    there are the following two cases:
    \begin{itemize}
    \item The case where $L$ is included in $\MCAL{S}_{\cup {\rm up}}$:
      In this case, there must be a labeled node $n'$ in which
      $L$ is put into $\MCAL{S}_{\cup {\rm up}}$.
      Then $L$ is put into the queue $\MCAL{Q}_{\rm up}$ for the labeled node which integrates $n'$ in the current deNM-tree.
      These operations can be done in constant time.
    \item Otherwise:
      $r_L$ is deleted from $\MCAL{Q}_{\rm right}$ and put in $\MCAL{S}_{\rm right}$ and $\MCAL{S}_{\cup {\rm right}}$.
      These operations can be done in constant time.
    \end{itemize}
  \item The case where $\MCAL{Q}_{\rm up}$ is not empty:
    Then 
    let the first element of $\MCAL{Q}_{\rm up}$ be $\PAR$-link $L$.
    If $\MCAL{S}_{\rm right}$ includes $r_L$, then
    then we apply the $\PAR$-elimination rule to $n_{\rm act}$ and $L$.
    Then $L$ is deleted from new $\MCAL{L}_{\rm up}$ and 
    when $\PAR$-link $L$ has the labeled node $n'$ connected to the port below,
    $n'$ is appended to $\MCAL{Q}_{\rm labeled}$ for $n_{\rm act}$. 
    These operations can be done in constant time.
    Otherwise, i.e., if $\MCAL{S}_{\rm right}$ does not includes $r_L$, then
    there are the following two cases:
    \begin{itemize}
    \item The case where $r_L$ is included in $\MCAL{S}_{\cup {\rm right}}$:
      In this case, there must be a labeled node $n'$ in which
      $r_L$ is put into $\MCAL{S}_{\cup {\rm right}}$.
      Then $r_L$ is put into the queue $\MCAL{Q}_{\rm right}$ for the labeled node which integrates $n'$ in the current deNM-tree.
      These operations can be done in constant time.
    \item Otherwise:
      $L$ is deleted from $\MCAL{Q}_{\rm up}$ and put in $\MCAL{S}_{\rm up}$ and $\MCAL{S}_{\cup {\rm up}}$.
      These operations can be done in constant time.
    \end{itemize}
  \end{itemize}
\end{enumerate}
\begin{definition}
We call the modified Algorithm $A$ with the data structures and the strategy above Algorithm $B$. 
\end{definition}
\begin{remark}
  \begin{itemize}
  \item When the local jump rule is applied, the first element of $\MCAL{Q}_{\rm down}$ is deleted.
    Therefore unlike the rewriting system $\MCAL{R}$,
    we do not need the second application check of the local jump rule to the same  $\PAR$-link anymore.
    But in order to detect a cycle in a DR-graph as soon as possible, this check may be included. 
  \item In order to establish linear time termination, we can not maintain the set of $\PAR$-nodes connected to the active node from the port above
    as a sole {\it queue data structure}.
    For example, if the active node in deNM-tree shown in Figure~\ref{figQuadraticEx2-1} maintain the information as
    the queue $1, 2, \ldots, k$, then
    we must scan the queue and delete one element at each $\PAR$-elimination, so that
    the reduction to one node tree takes quadratic time at the worst case. 
  \item In order to establish linear time termination, it is essential to adopt a {\it disjoint set-union data structure} \cite{CLRS09}.
    By using not only the disjoint set-union data structure $\MCAL{S}_{\rm up}$ but also a queue data structure $\MCAL{Q}_{\rm up}$,
    the amortized cost becomes linear. 
  \item By the similar reason, we use a disjoint set-union data structure $\MCAL{S}_{\rm right}$ and a queue data structure $\MCAL{Q}_{\rm right}$
    in order to maintain the set of right premise labels on the active node.
  \item We does neither delete the eliminated $\PAR$-link from $\MCAL{S}_{\rm up}$ nor the right premise label $r_L$ from $\MCAL{S}_{\rm right}$,
    because the cost may be linear,
    implying quadratic time termination at the worst case and the deletion is not necessary.
  \item We need to merge $\MCAL{S}_{\cup {\rm right}}$ and $\MCAL{S}_{\cup {\rm up}}$
    from the previous active node to that of the new active node in the local jump rule.
    Moreover we need to ``revival'' of $\PAR$-link labels and right premise labels from $\MCAL{S}_{\rm up}$ and $\MCAL{S}_{\rm right}$
    to $\MCAL{Q}_{\rm up}$ and $\MCAL{Q}_{\rm right}$ respectively. 
    For example, let us consider
    the deNM-tree $T(\Theta)$ shown in Figure~\ref{fig-deNMTree-RSVD-1} from obtained from the MLL proof net $\Theta$ shown in Figure~\ref{figRSVD-1}.
    We suppose that we don't have such cares.
    Then starting with $T(\Theta)$, after applications of $\PAR$-elimination for $\PAR_2$ and $\PAR_3$,
    $r_1$ and $\PAR_1$ would be in $\MCAL{S}_{\rm right}$ and $\MCAL{S}_{\rm up}$ for the active node respectively.
    Moreover $\MCAL{Q}_{\rm right}$ and $\MCAL{Q}_{\rm up}$ would be empty.
    This means that we are in deadlock, so we can not eliminate $\PAR_1$.
    That's why we need the ``revival'' mechanism. 
  \end{itemize}
\begin{figure}[htbp]
\begin{center}
  \includegraphics[scale=0.5]{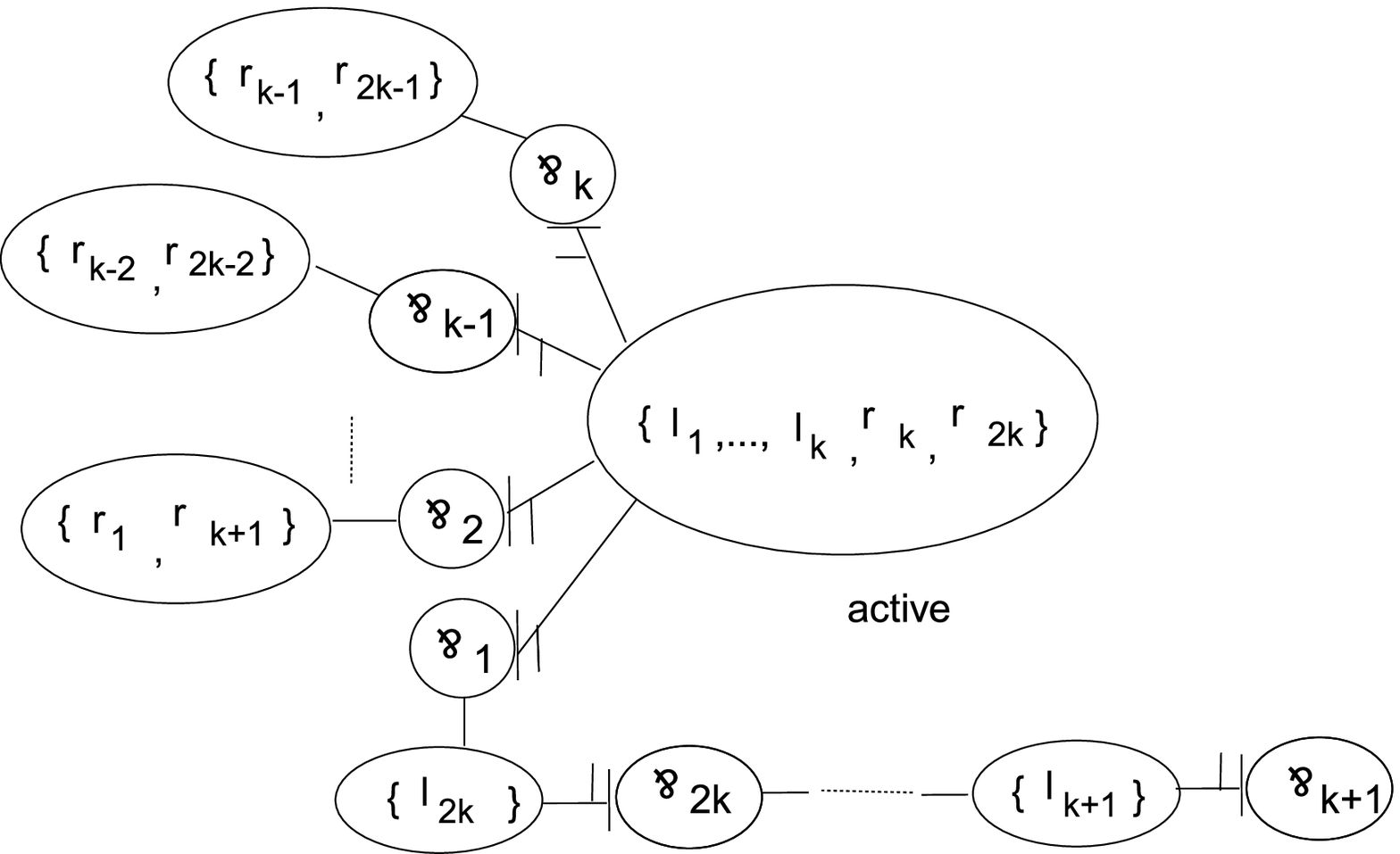}
\end{center}
 \caption{A deNM-Tree that needs quadratic computation at the worst case scenario}
 \label{figQuadraticEx2-1}
\end{figure}
\begin{figure}[htbp]
\begin{center}
  \includegraphics[scale=0.5]{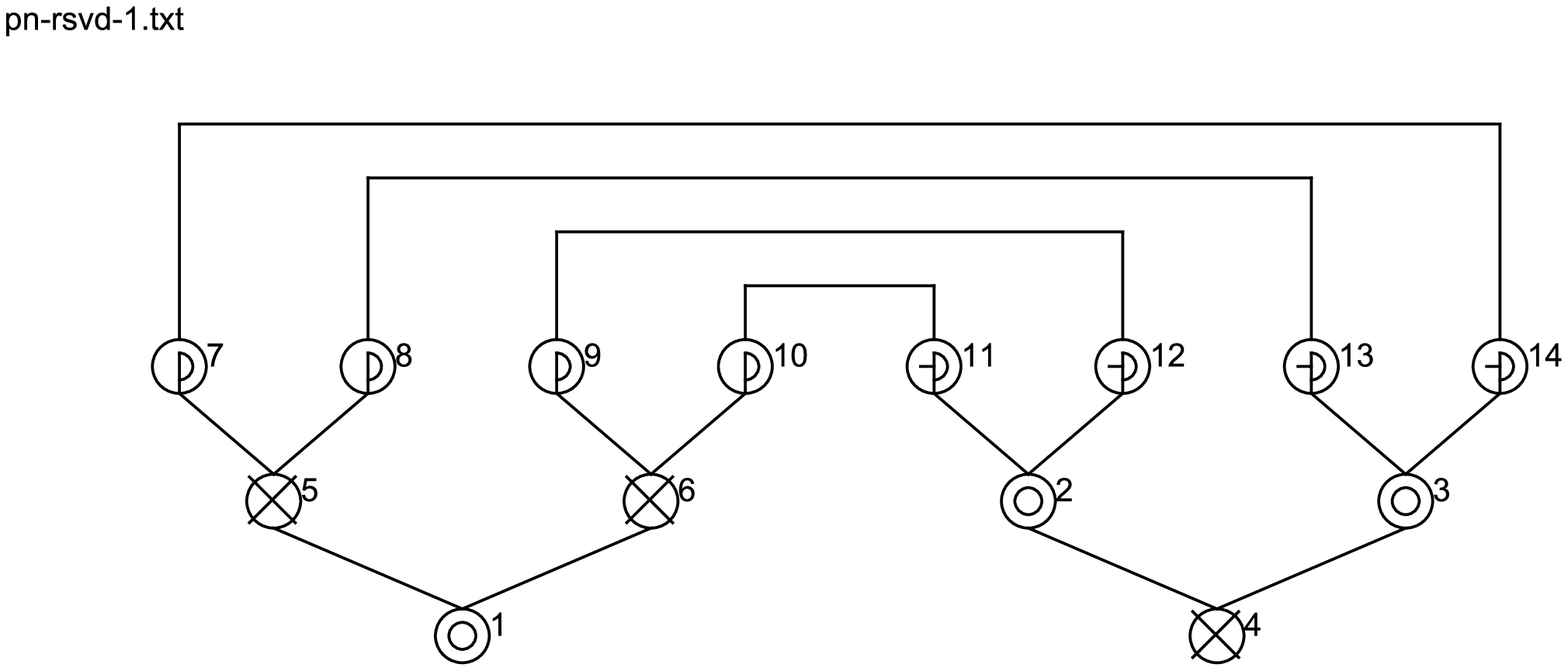}
\end{center}
 \caption{An MLL proof net}
 \label{figRSVD-1}
\end{figure}
\begin{figure}[htbp]
\begin{center}
  \includegraphics[scale=0.5]{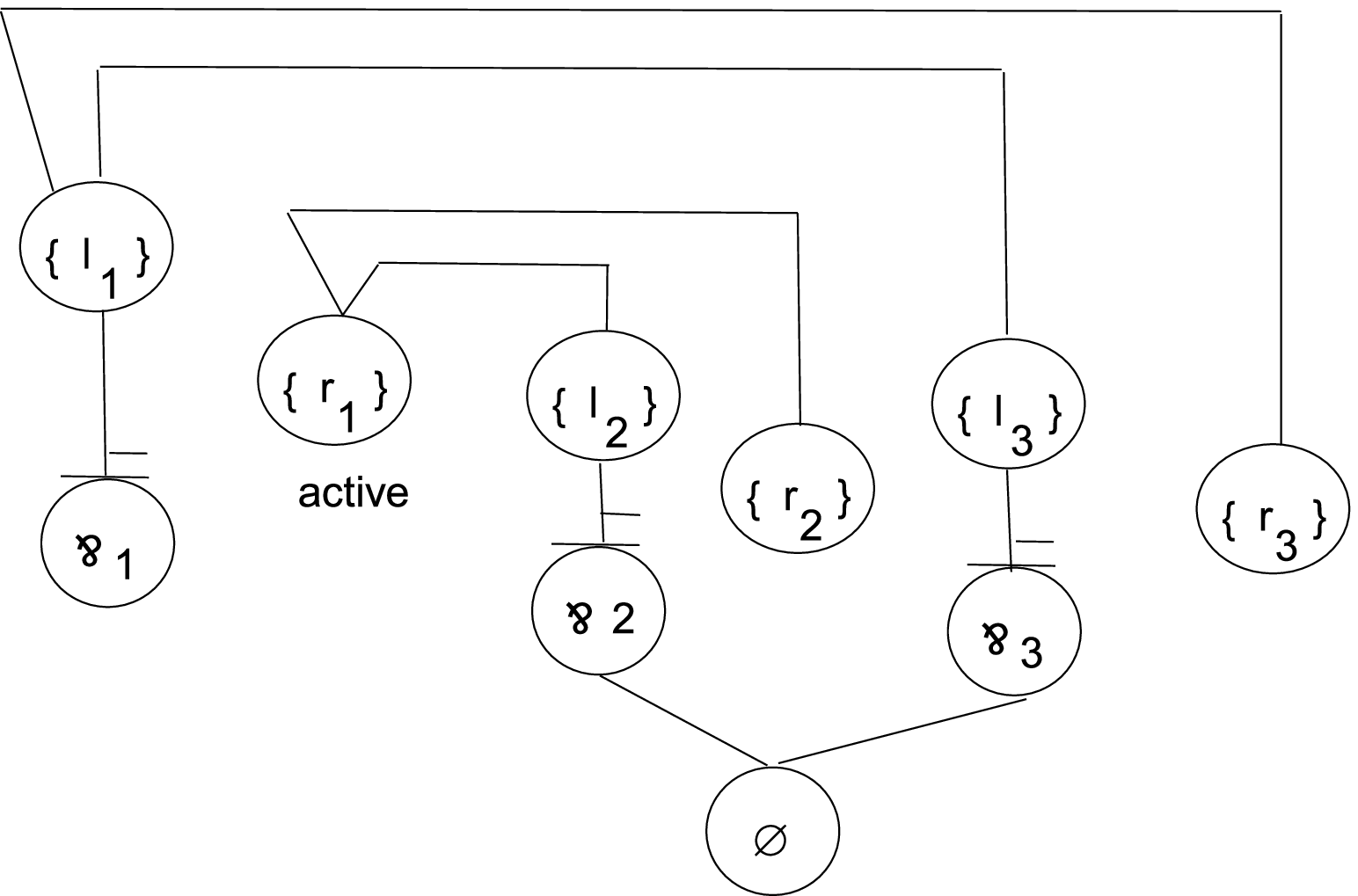}
\end{center}
 \caption{A deNM-Tree that needs $\MCAL{S}_{\cup {\rm right}}$ and $\MCAL{S}_{\cup {\rm up}}$ for linear time termination}
 \label{fig-deNMTree-RSVD-1}
\end{figure}
\end{remark}
\begin{proposition}
Algorithm $B$ terminates.
\end{proposition}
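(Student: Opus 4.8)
The plan is to exhibit a lexicographically ordered measure on the configurations of Algorithm $B$ that strictly decreases at every step, and thereby bound the total number of steps. I would first split the steps into two classes: the three genuine rewrite rules ($\PAR$-elimination, union, local jump), and the \emph{bookkeeping} steps of Step 3 (together with the self-reference deletion of Step 2) that merely shuffle labels between the queues $\MCAL{Q}_{\rm right}, \MCAL{Q}_{\rm up}$ and the disjoint-set partitions $\MCAL{S}_{\rm right}, \MCAL{S}_{\rm up}$ without applying any rewrite rule. The primary component of the measure is the number $N$ of nodes of the current deNM-tree; the genuine rewrites will be shown to drive this (and some auxiliary counters) down, and the bookkeeping steps will be bounded between consecutive rewrites.

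The genuine rewrites are the easy part. Both union and $\PAR$-elimination strictly decrease $N$ (union merges two labeled nodes; $\PAR$-elimination deletes a $\PAR$-node), so together they occur at most as many times as $T(\Theta)$ has nodes. Each local jump deletes the first element of some $\MCAL{Q}_{\rm down}$, and no step ever \emph{creates} a fresh entry of any $\MCAL{Q}_{\rm down}$ — union only concatenates two already-existing queues, and the $\PAR$-elimination rule appends only to $\MCAL{Q}_{\rm labeled}$, never to $\MCAL{Q}_{\rm down}$ — so the number of local jumps is at most the number of $\PAR$-nodes initially recorded in the $\MCAL{Q}_{\rm down}$ queues. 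In particular the active node changes only finitely often. A similar accounting bounds the self-reference deletions of Step 2, since entries of $\MCAL{Q}_{\rm labeled}$ arise only initially or from a $\PAR$-elimination.

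The heart of the argument is to bound the bookkeeping steps performed while $N$ and the active node $n_{\rm act}$ are fixed and both $\MCAL{Q}_{\rm down}$ and $\MCAL{Q}_{\rm labeled}$ of $n_{\rm act}$ are empty. The favourable bookkeeping step is a move-to-$\MCAL{S}$ step (the ``Otherwise'' branches): it permanently transfers a label from $\MCAL{Q}_{\rm right}$ to $\MCAL{S}_{\rm right}$, or from $\MCAL{Q}_{\rm up}$ to $\MCAL{S}_{\rm up}$, and since these partitions are never diminished (labels are never deleted from $\MCAL{S}_{\rm right}$ or $\MCAL{S}_{\rm up}$, and union/local-jump only merge them), the number of such steps for a fixed node is bounded by the size of its label set. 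The obstacle is the \emph{revival} step, which re-inserts a label into $\MCAL{Q}_{\rm up}$ or $\MCAL{Q}_{\rm right}$ and hence threatens to feed the queues indefinitely. To control it I would \emph{charge} each revival to a local jump: a revival of $r_L$ (resp.\ $L$) fires only when $r_L \in \MCAL{S}_{\cup {\rm right}} \setminus \MCAL{S}_{\rm right}$ (resp.\ $L \in \MCAL{S}_{\cup {\rm up}} \setminus \MCAL{S}_{\rm up}$) for the active node, and such a discrepancy between the merged set $\MCAL{S}_{\cup}$ and the local set $\MCAL{S}$ is created \emph{only} by the local-jump rule, which merges the previous active node's $\MCAL{S}_{\cup {\rm right}}, \MCAL{S}_{\cup {\rm up}}$ into the new one while leaving $\MCAL{S}_{\rm right}, \MCAL{S}_{\rm up}$ node-local. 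Showing that every revival strictly consumes such a discrepancy — the revived label is relocated to the queue of the node that recorded it, and re-processing there yields either a $\PAR$-elimination or a return to a local $\MCAL{S}$ — bounds the number of revivals by the total discrepancy ever produced, itself bounded by the already-finite number of local jumps times the number of $\PAR$-links. Combined with the bound on move-to-$\MCAL{S}$ steps, this gives finitely many bookkeeping steps per rewrite interval, and hence termination.

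The step I expect to be the main obstacle is exactly this charging argument: formalizing that a revival never re-fires for the same label without an intervening local jump or $\PAR$-elimination, i.e.\ that the ``$\MCAL{S}_{\cup} \setminus \MCAL{S}$'' potential is genuinely consumed and cannot be regenerated within a single bookkeeping interval. Making this precise requires carefully tracking how the disjoint-set identifiers of labeled nodes are renamed under union and how ``the labeled node which integrates $n'$'' is determined, so as to certify that a revived label is never placed back in front of the very queue from which it was just removed.
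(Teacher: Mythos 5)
Your proposal is correct in its overall architecture, but it takes a genuinely different route from the paper. The paper's proof is a single global lexicographic measure: it takes $(m_1+m_2+m_3+m_4,\; n_1+n_2)$, where $m_1$ counts nodes of the current deNM-tree, $m_2$ counts $\PAR$-nodes not yet visited by the local jump rule, $m_3, m_4$ count entries of $\MCAL{Q}_{\rm up}$ and $\MCAL{Q}_{\rm right}$ present since the start, and $n_1, n_2$ count entries created by the revival mechanism; it then asserts that this pair strictly decreases lexicographically at every step. You instead classify steps into genuine rewrites and bookkeeping, bound the rewrites by conservation arguments (union and $\PAR$-elimination decrease the node count; local jumps consume $\MCAL{Q}_{\rm down}$ entries, which no rule replenishes --- this mirrors the paper's $m_1$ and $m_2$), and then control the bookkeeping steps by an amortized charging of revivals against local jumps. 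Notably, your charging idea confronts head-on the one place where the paper's one-line argument is at its thinnest: a step that dequeues a \emph{revived} label and triggers a revival of the opposite label changes the paper's measure by $n_2 \mapsto n_2 - 1$, $n_1 \mapsto n_1 + 1$, leaving $(m,n)$ unchanged, so the claimed strict decrease at \emph{every} step needs exactly the kind of refinement you are attempting. What the paper's approach buys is brevity; what yours buys is an honest account of why revivals cannot cascade.

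The step you flag as the main obstacle --- that a revival cannot re-fire for the same label without an intervening local jump or $\PAR$-elimination --- can be closed more simply than by tracking disjoint-set identifiers. Observe that a revival never targets the active node: the revived label $r_L$ (resp.\ $L$) was recorded, at some node $n'$, into \emph{both} the local set $\MCAL{S}_{\rm right}$ and the merged set $\MCAL{S}_{\cup{\rm right}}$ (resp.\ $\MCAL{S}_{\rm up}$ and $\MCAL{S}_{\cup{\rm up}}$), and local sets are unioned when nodes merge; so if the labeled node integrating $n'$ were the active node itself, the label would lie in the active node's local set, contradicting the precondition under which revival fires. Consequently, during a maximal interval of bookkeeping steps (no rewrite rule applied), every step removes one element from the active node's $\MCAL{Q}_{\rm right}$ or $\MCAL{Q}_{\rm up}$ and inserts, at worst, into the queue of a \emph{non-active} node; hence the active node's queue length strictly decreases and each such interval is finite. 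Since your first part already shows the number of rewrite applications is finite, termination follows, and your charging lemma becomes unnecessary in this stronger, simpler form.
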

\begin{proof}
  We use measure $(m_1 + m_2 + m_3 + m_4 , n_1 + n_2)$ for the proof, where
  \begin{enumerate}
  \item $m_1$ is the number of labeled nodes and $\PAR$-nodes in the deNM-tree.
  \item $m_2$ is the number of $\PAR$-nodes that have not been visited yet by the local jump rule.
  \item $m_3$ is the number of $\PAR$-nodes that is included in $\MCAL{Q}_{\rm up}$ for a labeled node and have been included from the starting point.
  \item $m_4$ is the number of right premise labels that is included in $\MCAL{Q}_{\rm right}$ for a labeled node and have been included from the starting point.
  \item $n_1$ is the number of $\PAR$-nodes that have not been included in $\MCAL{Q}_{\rm up}$ for a labeled node initially,
    but have been put in it by ``revival mechanism''.
  \item $n_2$ is the number of right premise labels that have not been included in $\MCAL{Q}_{\rm right}$ for a labeled node initially
    but have been put in it by ``revival mechanism''.
  \end{enumerate}
  Then it is easily see that
  the measure $(m, n)$ strictly decreases in lexicographic order for each step.
  $\Box$
\end{proof}
\begin{proposition}
  \begin{itemize}
  \item If $\Theta$ is an MLL proof net, then Algorithm $B$ with the input $\Theta$ answers {\bf yes}.
  \item If $\Theta$ is not an MLL proof net, then Algorithm $B$ with the input $\Theta$ answers {\bf no}.
  \end{itemize}
\end{proposition}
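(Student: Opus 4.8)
The plan is to derive the correctness of Algorithm $B$ from that of Algorithm $A$ (Theorem~\ref{mainTheorem}) by showing that every run of $B$ is a faithful implementation of a \emph{maximal} reduction of the rewriting system $\MCAL{R}$ starting from $T(\Theta)$. The first step is to record a representation invariant: at every moment of a run, writing $n_{\rm act}$ for the current active node, the auxiliary structures correctly describe the local neighbourhood and the label set of $n_{\rm act}$ in the current deNM-tree, namely $\MCAL{Q}_{\rm down}$ lists exactly the $\PAR$-nodes adjacent below, $\MCAL{Q}_{\rm labeled}$ the adjacent labeled nodes, $\MCAL{Q}_{\rm up}\cup\MCAL{S}_{\rm up}$ the $\PAR$-nodes adjacent above, and $\MCAL{Q}_{\rm right}\cup\MCAL{S}_{\rm right}$ the right-premise labels occurring in the label set of $n_{\rm act}$. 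This is checked by a routine induction on the number of executed micro-steps, using that union merges both the adjacency lists and the label sets, that local jump resets $n_{\rm act}$ to the node below a $\PAR$-node carrying $r_L$, and that a $\PAR$-elimination removes the eliminated $\PAR$-node and promotes the labeled node hanging below it into $\MCAL{Q}_{\rm labeled}$. Granting the invariant, each micro-step that actually rewrites the tree is a legal instance of one of the three rules of $\MCAL{R}$ (the disjoint-set bookkeeping and the revival re-enqueueing leave the tree unchanged), so the sequence of deNM-trees produced by $B$ is a genuine $\MCAL{R}$-reduction sequence out of $T(\Theta)$.

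With this simulation in hand, the second bullet (soundness) is immediate. By Proposition~\ref{prop-deNM-tree-basic-2}(b),(c), $\PAR$-inconsistency and directed cyclicity of $T(\Theta)$ are preserved by every $\MCAL{R}$-step, whereas the single-node tree satisfies both $\PAR$-consistency and directed acyclicity; hence if $\Theta$ is not an MLL proof net then, by Proposition~\ref{prop-deNM-tree-basic-1}, no $\MCAL{R}$-reduction of $T(\Theta)$ can reach the one-node tree. Since $B$ outputs \textbf{yes} only upon reaching a one-node tree (necessarily labeled by $S_{\rm full}$, as union never discards labels), $B$ cannot answer \textbf{yes}, and since it terminates, as established above, it therefore answers \textbf{no}. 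The case where $T(\Theta)$ is undefined, and a second local jump across a $\PAR$-link when that check is retained, likewise only produce \textbf{no}, exactly as in Lemma~\ref{lemma-Main} and the only-if part of Theorem~\ref{mainTheorem}.

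For the first bullet (completeness) I would use a sharper consequence of Theorem~\ref{mainTheorem}: its only-if part shows, via Lemma~\ref{lemma-Main}, that when $\Theta$ is an MLL proof net \emph{no} maximal $\MCAL{R}$-reduction of $T(\Theta)$ can get stuck at a non-trivial normal form, so every maximal reduction terminates at the one-node tree. Thus it suffices to prove that the run of $B$ is maximal, i.e.\ that $B$ never halts while some rule of $\MCAL{R}$ is still applicable. Local jumps and unions are applied eagerly by Steps~1 and~2, so the only danger is a spurious \textbf{no} issued in Step~3 when both $\MCAL{Q}_{\rm right}$ and $\MCAL{Q}_{\rm up}$ are empty although a $\PAR$-elimination is in fact enabled on the current tree. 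Ruling this out is the heart of the matter and the main obstacle.

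The obstacle is exactly what the revival mechanism addresses, and I would formalize its effect as the following invariant, proved by induction on the micro-steps with a case analysis over the five branches of Step~3 and the local-jump step: a right-premise label $r_L$ is shelved in $\MCAL{S}_{\rm right}$ (resp.\ a $\PAR$-node $L$ is shelved in $\MCAL{S}_{\rm up}$) and absent from $\MCAL{Q}_{\rm right}$ (resp.\ $\MCAL{Q}_{\rm up}$) only while its eliminating partner is \emph{not} yet co-located with it on a common node, and in that case the shelved item is recorded in the merged set $\MCAL{S}_{\cup {\rm right}}$ (resp.\ $\MCAL{S}_{\cup {\rm up}}$) of the node holding it. The decisive point is that the merged sets are precisely the data carried across a local jump, so as soon as a local jump followed by unions brings the partner of a shelved item onto the active node, the ``otherwise'' clauses of Step~3 detect the coincidence through $\MCAL{S}_{\cup {\rm up}}$ and $\MCAL{S}_{\cup {\rm right}}$ and re-enqueue the item, so the now-enabled $\PAR$-elimination is eventually performed; this is exactly the deadlock that the remark around $\MCAL{S}_{\cup {\rm right}}$ and $\MCAL{S}_{\cup {\rm up}}$ warns against. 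Finiteness of the whole process, in particular that revivals cannot loop, is guaranteed by the lexicographic measure of the termination proof, whose components $n_1,n_2$ count precisely the revived items. Consequently, when $B$ reaches Step~3 with both queues empty at a non-trivial tree, no enabled $\PAR$-elimination has been overlooked, the tree is a genuine $\MCAL{R}$-normal form, and maximality together with the only-if part of Theorem~\ref{mainTheorem} forces $\Theta$ not to be a proof net; contrapositively, an MLL proof net drives $B$ to the one-node tree and the answer \textbf{yes}.
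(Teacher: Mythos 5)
Your overall architecture is the same as the paper's: you show that every step of Algorithm $B$ is either a genuine $\MCAL{R}$-step or a no-op (so the \textbf{no} direction follows from Theorem~\ref{mainTheorem}, resp.\ Propositions~\ref{prop-deNM-tree-basic-1} and \ref{prop-deNM-tree-basic-2}), and you reduce the \textbf{yes} direction to a deadlock-freedom invariant for the shelving/revival mechanism. Your invariant (``an item is shelved and dequeued only while its partner is not co-located, and is then recorded in the merged set'') is essentially the justification of the paper's terser invariant that for every $\PAR$-node $L$ still present, either $L$ lies in some $\MCAL{Q}_{\rm up}$ or $r_L$ lies in some $\MCAL{Q}_{\rm right}$. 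Up to that point the proposal is faithful to the paper, only more detailed.

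However, there is a genuine gap in your completeness argument: the claim that when $B$ halts in Step~3 with all queues empty ``the tree is a genuine $\MCAL{R}$-normal form'' is false in general. Emptiness of $\MCAL{Q}_{\rm down}$ does \emph{not} mean that no $\PAR$-node is connected below the active node; it means every such $\PAR$-node has already been jumped once, and since the local jump rule of $\MCAL{R}$ is memoryless, it is still applicable to those nodes. The paper's own example $\Theta_3$ (Figures~\ref{fignonPN-loop-1} and \ref{fignonPN-loop-1-Tree}) terminates in exactly such a configuration: the already-jumped $\PAR$-node $1$ remains attached below the final active node, so the halting tree is not an $\MCAL{R}$-normal form and Lemma~\ref{lemma-Main}, which is about stuck configurations, cannot be invoked. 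To close the argument you must also rule out, under the assumption that $\Theta$ is a proof net, the case where $B$ halts while only such re-jumps remain applicable; this requires the second ingredient of the only-if part of Theorem~\ref{mainTheorem}, namely that a return of the active token to the conclusion side of an already-jumped $\PAR$-link exhibits a DR-switching $S$ with $S(\Theta)$ cyclic, contradicting Theorem~\ref{thm-DR-character}. You mention this double-jump argument only in the soundness paragraph and only ``when that check is retained,'' but Algorithm $B$ does not retain that check (the Remark makes it optional), and the completeness case analysis is precisely where the argument is needed. With that case added, your proof goes through and matches the paper's intent.
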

\begin{proof}
  Each step in Algorithm $B$ rewrite a deNM-tree using one of three rules or let the deNM-tree remain.
  So by Theorem~\ref{mainTheorem} the latter statement holds.
  We observe that for each step in Algorithm $B$ 
  the following property holds:
  \begin{quote}
    For each $\PAR$-node $L$ in the deNM-tree $T$,
    $L$ is included in $\MCAL{Q}_{\rm up}$
    or
    $r_L$ is included in $\MCAL{Q}_{\rm right}$ for a labeled node in $T$.
  \end{quote}
  Then if $\Theta$ is an MLL proof net, then
  Algorithm $B$ with input $\Theta$ answers {\bf yes}
  since
  the above property guarantees that each $\PAR$-node $L$ in $T(\Theta)$ is eliminated. 
  $\Box$
\end{proof}
\begin{theorem}
  Let $\Theta$ be an MLL proof structure. 
  There is a random access machine that simulate Algorithm $B$ with input $\Theta$ in $O(n)$ time, where
  $n$ is the number of the links in $\Theta$.
\end{theorem}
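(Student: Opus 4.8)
The plan is to charge the total running time of Algorithm $B$ to three separate accounts and to bound each by $O(n)$: the number of applications of the three rewrite rules, the number of \emph{failed} $\PAR$-elimination attempts together with the revivals they trigger, and the aggregate cost of the disjoint set-union operations underlying $\MCAL{S}_{\rm up}$, $\MCAL{S}_{\rm right}$, $\MCAL{S}_{\cup {\rm up}}$ and $\MCAL{S}_{\cup {\rm right}}$. Since $T(\Theta)$ has $O(n)$ nodes (each link of $\Theta$ contributes a constant number of nodes to the translation), and since on a random access machine every elementary bookkeeping action — inspecting or deleting the head of a queue, following a pointer, appending a single element — costs $O(1)$, it suffices to establish the three $O(n)$ bounds and then combine them.

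First I would bound the rewrite-rule applications, following the backbone of the termination measure of the preceding proposition. Each local jump deletes the head of $\MCAL{Q}_{\rm down}$ and visits a fresh $\PAR$-node, so there are $O(n)$ local jumps; each union strictly decreases the number of labeled nodes, so there are $O(n)$ unions; and each successful $\PAR$-elimination removes a distinct $\PAR$-node, so there are $O(n)$ of these. All three are dominated by the component $m_1$, which is $O(n)$ at the start and never increases under these rules, and the per-application overhead of merging the constant number of data-structure fields is $O(1)$ by the strategy's own stipulation that these merges run in constant time.

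The delicate account is the one governed by $m_3, m_4$ and the revival counters $n_1, n_2$. Here I would show that every unsuccessful test in Step~3 either moves one label permanently from $\MCAL{Q}_{\rm right}$ into $\MCAL{S}_{\rm right}$ (resp. one $\PAR$-node from $\MCAL{Q}_{\rm up}$ into $\MCAL{S}_{\rm up}$), of which there are $O(n)$ in total, or else triggers exactly one revival that re-enqueues a single element. The crux is to bound the revivals: I would charge each revival of $L$ (resp. $r_L$) to the unique local-jump event that merged the $\MCAL{S}_{\cup {\rm up}}$ (resp. $\MCAL{S}_{\cup {\rm right}}$) set witnessing the deadlock, and argue that no (element, local-jump) incidence is charged twice, so that the number of revivals is at most the number of local jumps times a constant, hence $O(n)$. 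Consequently $\MCAL{Q}_{\rm up}$ and $\MCAL{Q}_{\rm right}$ receive only $O(n)$ elements over the entire run, and Step~3 is entered $O(n)$ times.

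Finally I would account for the set operations themselves. Every merge of $\MCAL{S}_{\rm up}$, $\MCAL{S}_{\rm right}$, $\MCAL{S}_{\cup {\rm up}}$ or $\MCAL{S}_{\cup {\rm right}}$ arises either from contracting an edge of the fixed tree $T(\Theta)$ (in the union rule) or from merging along the path traversed by local jumps, so the \emph{pattern of unions is constrained by the static tree} $T(\Theta)$, which is known in advance; this lets me invoke the linear-time special case of disjoint set-union for a prescribed union tree to obtain $O(n)$ total over all make-set, union and membership operations (with ordinary union-by-rank and path compression one would get only the weaker $O(n\,\alpha(n))$). Combining the three $O(n)$ accounts with the constant per-step overhead yields the claimed $O(n)$-time RAM simulation. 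I expect the two genuinely non-routine points to be the charging argument that pins the number of revivals down to $O(n)$, and the passage from $O(n\,\alpha(n))$ to a true $O(n)$ for the set operations via the static union tree; the remaining queue manipulations are straightforward.
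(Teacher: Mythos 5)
Your proposal follows essentially the same route as the paper's proof: translate $\Theta$ in $O(n)$, bound the number of rewrite-rule applications by $O(n)$, and isolate the only nontrivial cost in the union/find operations on $\MCAL{S}_{\rm up}$, $\MCAL{S}_{\rm right}$, $\MCAL{S}_{\cup {\rm up}}$, $\MCAL{S}_{\cup {\rm right}}$, which you bring down from $O(n\,\alpha(n))$ to genuinely linear time by the very argument the paper uses: the unions follow a tree structure known in advance, with each union joining a node to its parent, so the linear-time special case of disjoint set union \cite{GT85} applies.

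The one place you go beyond the paper is also the one place where your argument, as stated, does not close. The paper compresses all step-counting into the remark that each node is visited at most twice and never explicitly bounds the failed $\PAR$-elimination attempts or the revivals; you attempt this, which is welcome, but your charging invariant does not deliver the bound you claim. From ``no (element, local-jump) incidence is charged twice'' one only gets that the number of revivals is at most (number of elements) $\times$ (number of local jumps), which is $O(n^2)$, not (number of local jumps) $\times\, O(1)$. What you actually need is that each individual element $L$ or $r_L$ is revived only $O(1)$ times over the whole run---for instance because a revival of $L$ permanently retires $r_L$ from its queue, so the pair $(L, r_L)$ can generate only a bounded number of revive events---and then charge revivals to elements rather than to local jumps. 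With that repair your three accounts all close and the argument goes through; indeed it would make explicit a piece of bookkeeping that the paper's own proof leaves unproved.
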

\begin{proof}
  The tree check for $S_{\forall \ell}(\Theta)$ can be computed in $O(n)$
  by using breadth-first or depth-first search.
  If $S_{\forall \ell}(\Theta)$ is a tree, then
  $T(\Theta)$ can also be obtained and an arbitrary labeled node in $T(\Theta)$ be found in $O(n)$.
  In the tree rewriting process, each node can be visited at most twice.
  The total number of applications of three rewrite rules is $O(n)$. 
  The only nontrivial part is management of right premise label sets $\MCAL{S}_{\rm right}$ (and $\MCAL{S}_{\cup {\rm right}}$)
  and that of up port sets $\MCAL{S}_{\rm up}$ (and $\MCAL{S}_{\cup {\rm up}}$). 
  The union operation in the union rule and the query (find) operation in the $\PAR$-elimination and local jump rules 
  are a typical instance of disjoint-set union-find operations \cite{CLRS09}.
  We note that we only use a fixed number of these operations in each rewrite step. 
  There is not only a query operation between $\MCAL{S}_{\rm up}$ (and $\MCAL{S}_{\cup {\rm up}}$) and
  the first element in $\MCAL{Q}_{\rm right}$ 
  but also that between $\MCAL{S}_{\rm right}$ (and $\MCAL{S}_{\cup {\rm right}}$) and the first element in $\MCAL{Q}_{\rm up}$.
  In general case, the amortized cost of these operations is superlinear.
  However, if the underlying structure is a tree known in advance and the union operation is only performed between one node and its parent,
  then the amortized cost is linear in the total number of both operations in the random access machine model \cite{GT85}. 
  Luckily this applies to our case.
  Therefore our claim holds. 
  $\Box$
\end{proof}

\begin{example}
  We consider the MLL proof net $\Theta_4$ shown in Figure~\ref{figPN2}.
  \begin{figure}[htbp]
\begin{center}
  \includegraphics[scale=0.5]{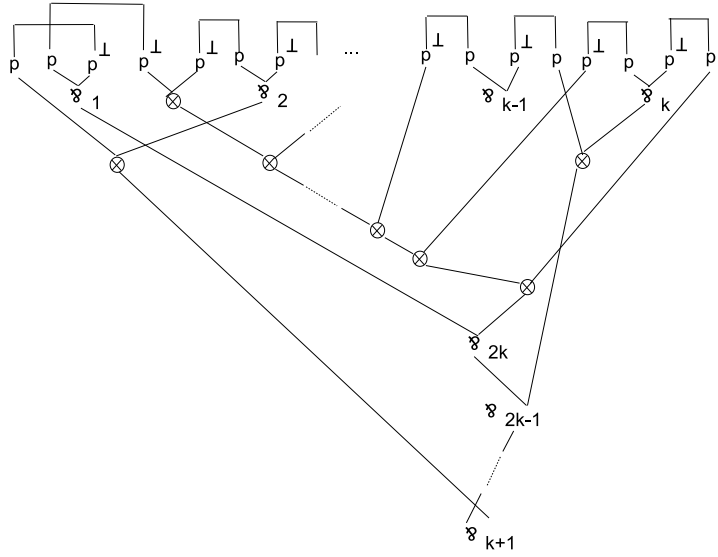}
\end{center}
 \caption{MLL Proof Net $\Theta_4$}
 \label{figPN2}
\end{figure}
  The proof net $\Theta$ is translated to the deNM-tree $T(\Theta)$ shown in Figure~\ref{figQuadraticEx2-0}.
  \begin{figure}[htbp]
\begin{center}
  \includegraphics[scale=0.5]{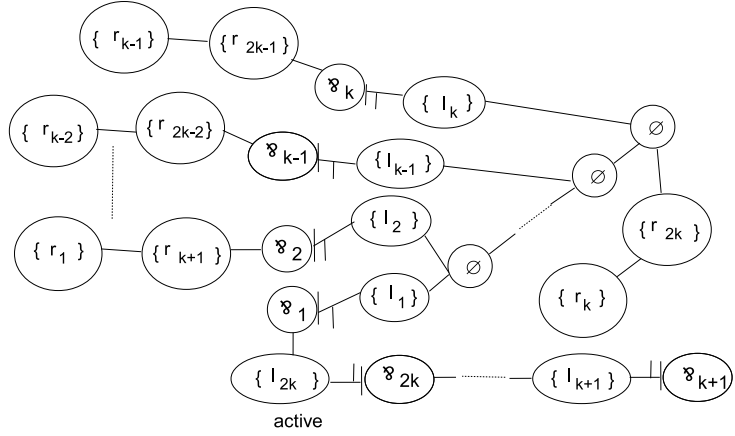}
\end{center}
 \caption{deNM-Tree $T(\Theta_4)$}
 \label{figQuadraticEx2-0}
\end{figure}
  The deNM-tree $T(\Theta_4)$ needs $O(k^2)$ computations in $\MCAL{R}$ when starting from the node labeled by $\{ \ell_{2k} \}$ at the worst scenario.
  Let us start rewriting based on our reduction strategy.
  In the starting active node has $\MCAL{Q}_{\rm down} = 1$ and $\MCAL{Q}_{\rm up} = 2k$.
  On the other hand, $\MCAL{Q}_{\rm labeled}, \MCAL{Q}_{\rm right}, \MCAL{S}_{\rm right}, \MCAL{Q}_{\rm up}$, and $\MCAL{S}_{\rm up}$ are
  all empty.
  After several rewriting steps, we reach the deNM tree shown in Figure~\ref{figQuadraticEx2-1}.
  Then for example, the active node has
  $\MCAL{Q}_{\rm up} = 1,2, \ldots, k-1, k$ and $\MCAL{Q}_{\rm right} = r_{2k}, r_{k}$.
  On the other hand, $\MCAL{Q}_{\rm down}, \MCAL{Q}_{\rm labeled}, \MCAL{S}_{\rm right}$, and $\MCAL{S}_{\rm up}$ are
  all empty.
  Our reduction strategy has some choices about which passive node is chosen for application of the union rule.
  But any choice leads to linear time termination.
  To each element in $\MCAL{Q}_{\rm up}$, application of the $\PAR$-elimination rule is tried.
  But $\MCAL{S}_{\rm right}$ is empty, all attempts fail.
  Then $\MCAL{S}_{\rm up}$ becomes $\{ 1,2, \ldots, k-1, k \}$.
  Next to each element in $\MCAL{Q}_{\rm right}$ application of the $\PAR$-elimination rule is tried.
  Then $r_{2k}$ fails, but $r_{k}$ succeeds.
  Then we get the deNM-tree $T_1$ shown in Figure~\ref{figQuadraticEx2-2}.
  In the new active node, $\MCAL{S}_{\rm up}$ is $\{ 1,2, \ldots, k-1, k \}$
  because we do not try to delete $k$.
  In addition, $\MCAL{S}_{\rm right} = \{ r_{2k} \}$ and $\MCAL{Q}_{\rm labeled}$ has one node labeled by $\{ r_{k+1}, r_{2k+1} \}$.
  The others are empty. 
  \begin{figure}[htbp]
    \begin{center}
      \includegraphics[scale=0.5]{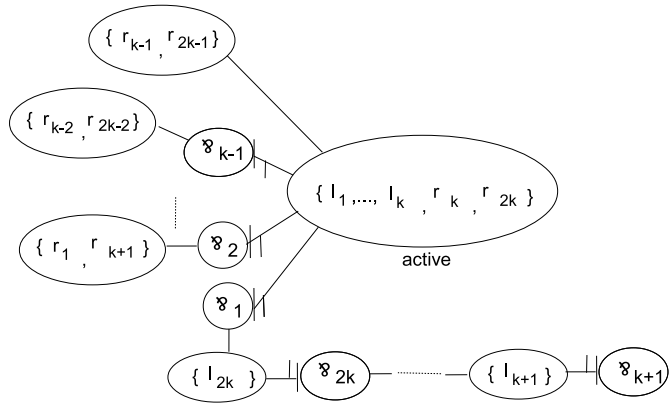}
    \end{center}
    \caption{deNM-Tree $T_1$}
    \label{figQuadraticEx2-2}
  \end{figure}

  \begin{figure}[htbp]
    \begin{center}
      \includegraphics[scale=0.5]{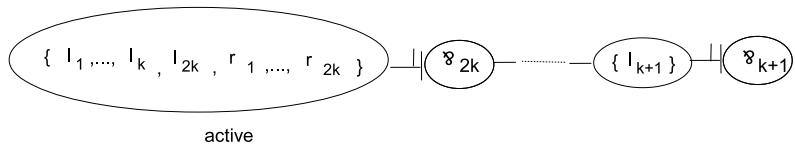}
    \end{center}
    \caption{deNM-Tree $T_2$}
    \label{figQuadraticEx2-3}
  \end{figure}
  After several steps, we reach the deNM-tree $T_2$ shown in Figure~\ref{figQuadraticEx2-3}.
  Up to now, we have tried application of the $\PAR$-elimination $3 k$ times.
  In the active node, we have
  \begin{eqnarray*}
    \MCAL{S}_{\rm right} & = & \{ r_{k+1}, \ldots, r_{2k} \} \\
    \MCAL{Q_{\rm up}} & = & 2k  \\
    \MCAL{S}_{\rm up} & = & \{ 1,2, \ldots, k-1, k \}.
  \end{eqnarray*}
  The others are empty.
  The for the only element $2k$ in $\MCAL{Q_{\rm up}}$ application of the $\PAR$-elimination rule is tried
  and succeeds.
  After several steps, we finally obtain one node tree.
  In total from the start to the end, we have tried application of the $\PAR$-elimination $4 k$ times, which is linear.
\end{example}
\begin{example}
  As seen previously, the deNM-tree $T(\Theta_3)$ shown in Figure~\ref{fignonPN-loop-1-Tree}
  is obtained from MLL proof structure $\Theta_3$ shown in Figure~\ref{fignonPN-loop-1}, which is not an MLL proof net.
  When we choose the node $n$ labeled by $\{ r_2 \}$ as the starting rule,
  since $\MCAL{Q}_{\rm down} = 1$ for $n$, i.e., it is not empty,
  the local jump rule is applied to $\PAR$-link $1$.
  After two applications of the union rule, in the current active node, $\MCAL{Q}_{\rm down}$, $\MCAL{Q}_{\rm labeled}$, $\MCAL{Q}_{\rm right}$, and $\MCAL{Q}_{\rm up}$ become all empty.
  So the output is {\bf no}. 
\end{example}

\section{Concluding Remarks}
In this paper we have accomplished a new linear-time correctness condition of unit-free MLL proof nets based on the rewriting system over trees.
Among known linear-time correctness conditions of MLL, ours is definitely simplest.
We already given a prototype implementation for our algorithm in \cite{Mat19a}.
Compared with them based on Girard's original sequentialization definition and original de Naurois and Mogbil's correctness condition,
our new implementation is remarkably faster:
MLL Proof nets that cannot be checked by them in a week, can be checked by ours in a minute!

Although it is not trivial whether
a linear time sequentialization algorithm is derived from our linear time correctness condition,
we have already obtained such an algorithm.
The topic will be given elsewhere. 

There are some future research directions. 
\begin{itemize}
\item Extensions of our result to variants like noncommutative fragments or extensions like MALL or MELL.
\item Implementation issues:
  In particular, to some extent it may be possible to have several active nodes in a deNM-tree and to exploit parallelism using one or many multi-core processors.
\item Application to proof search:
  In \cite{Mat19a} in order to search MLL proof nets for a given MLL formula,
  a backtracking mechanism and a naive implementation of de Naurois and Mogbil's correctness condition are combined.
  Our result may be used to obtain more elegant implementations for MLL proof search.
  That was our original motivation for this work. 
\item Mechanical formalization using your favorite interactive theorem prover.
\end{itemize}

\bibliographystyle{amsplain}
\bibliography{generic}

\end{document}